\newtheorem{definition}{Definition} 
\newtheorem{proposition}{Proposition} 
\newtheorem{lemma}{Lemma}
\newtheorem{theorem}{Theorem} 
\newtheorem{corollary}[definition]{Corollary}
\newtheorem{conjecture}[definition]{Conjecture}
\newtheorem{remark}[definition]{Remark}
\newtheorem{example}{Example} 
\newtheorem{question}[definition]{Question}
\def\bcj{\begin{conjecture}}
	\def\ecj{\end{conjecture}}
\def\bcr{\begin{corollary}}
	\def\ecr{\end{corollary}}
\def\bd{\begin{definition}}
	\def\ed{\end{definition}}
\def\bea{\begin{eqnarray}}
	\def\eea{\end{eqnarray}}
\def\bem{\begin{enumerate}}
	\def\eem{\end{enumerate}}
\def\bex{\begin{example}}
	\def\eex{\end{example}}
\def\bim{\begin{itemize}}
	\def\eim{\end{itemize}}
\def\bl{\begin{lemma}}
	\def\el{\end{lemma}}
\def\bma{\begin{bmatrix}}
	\def\ema{\end{bmatrix}}
\def\bpf{\begin{proof}}
	\def\epf{\end{proof}}
\def\bpp{\begin{proposition}}
	\def\epp{\end{proposition}}
\def\bqu{\begin{question}}
	\def\equ{\end{question}}
\def\br{\begin{remark}}
	\def\er{\end{remark}}
\def\bt{\begin{theorem}}
	\def\et{\end{theorem}}
\def\squareforqed{\hbox{\rlap{$\sqcap$}$\sqcup$}}
\def\qed{\ifmmode\squareforqed\else{\unskip\nobreak\hfil
		\penalty50\hskip1em\null\nobreak\hfil\squareforqed
		\parfillskip=0pt\finalhyphendemerits=0\endgraf}\fi}
\def\endenv{\ifmmode\;\else{\unskip\nobreak\hfil
		\penalty50\hskip1em\null\nobreak\hfil\;
		\parfillskip=0pt\finalhyphendemerits=0\endgraf}\fi}
\newenvironment{proof}{\noindent \textbf{{Proof.~} }}{\qed}
\def\Dbar{\leavevmode\lower.6ex\hbox to 0pt
	{\hskip-.23ex\accent"16\hss}D}
\def\url@leostyle{%
	\@ifundefined{selectfont}{\def\UrlFont{\sf}}{\def\UrlFont{\small\ttfamily}}}
\def\bcj{\begin{conjecture}}
	\def\ecj{\end{conjecture}}
\def\bcr{\begin{corollary}}
	\def\ecr{\end{corollary}}
\def\bd{\begin{definition}}
	\def\ed{\end{definition}}
\def\bea{\begin{eqnarray}}
	\def\eea{\end{eqnarray}}
\def\bem{\begin{enumerate}}
	\def\eem{\end{enumerate}}
\def\bex{\begin{example}}
	\def\eex{\end{example}}
\def\bim{\begin{itemize}}
	\def\eim{\end{itemize}}
\def\bl{\begin{lemma}}
	\def\el{\end{lemma}}
\def\bpf{\begin{proof}}
	\def\epf{\end{proof}}
\def\bpp{\begin{proposition}}
	\def\epp{\end{proposition}}
\def\bqu{\begin{question}}
	\def\equ{\end{question}}
\def\br{\begin{remark}}
	\def\er{\end{remark}}
\def\bt{\begin{theorem}}
	\def\et{\end{theorem}}
\def\btb{\begin{tabular}}
	\def\etb{\end{tabular}}
\newcommand{\nc}{\newcommand}
\def\a{\alpha}
\nc{\bbA}{\mathbb{A}} \nc{\bbB}{\mathbb{B}} \nc{\bbC}{\mathbb{C}}
\nc{\bbD}{\mathbb{D}} \nc{\bbE}{\mathbb{E}} \nc{\bbF}{\mathbb{F}}
\nc{\bbG}{\mathbb{G}} \nc{\bbH}{\mathbb{H}} \nc{\bbI}{\mathbb{I}}
\nc{\bbJ}{\mathbb{J}} \nc{\bbK}{\mathbb{K}} \nc{\bbL}{\mathbb{L}}
\nc{\bbM}{\mathbb{M}} \nc{\bbN}{\mathbb{N}} \nc{\bbO}{\mathbb{O}}
\nc{\bbP}{\mathbb{P}} \nc{\bbQ}{\mathbb{Q}} \nc{\bbR}{\mathbb{R}}
\nc{\bbS}{\mathbb{S}} \nc{\bbT}{\mathbb{T}} \nc{\bbU}{\mathbb{U}}
\nc{\bbV}{\mathbb{V}} \nc{\bbW}{\mathbb{W}} \nc{\bbX}{\mathbb{X}}
\nc{\bbZ}{\mathbb{Z}}
\nc{\bA}{{\bf A}} \nc{\bB}{{\bf B}} \nc{\bC}{{\bf C}}
\nc{\bD}{{\bf D}} \nc{\bE}{{\bf E}} \nc{\bF}{{\bf F}}
\nc{\bG}{{\bf G}} \nc{\bH}{{\bf H}} \nc{\bI}{{\bf I}}
\nc{\bJ}{{\bf J}} \nc{\bK}{{\bf K}} \nc{\bL}{{\bf L}}
\nc{\bM}{{\bf M}} \nc{\bN}{{\bf N}} \nc{\bO}{{\bf O}}
\nc{\bP}{{\bf P}} \nc{\bQ}{{\bf Q}} \nc{\bR}{{\bf R}}
\nc{\bS}{{\bf S}} \nc{\bT}{{\bf T}} \nc{\bU}{{\bf U}}
\nc{\bV}{{\bf V}} \nc{\bW}{{\bf W}} \nc{\bX}{{\bf X}}
\nc{\bZ}{{\bf Z}} \nc{\bm}{{\bf m}} \nc{\bv}{{\bf v}}
\nc{\ba}{{\bf a}} \nc{\be}{{\bf e}} \nc{\bu}{{\bf u}}
\nc{\brr}{{\bf r}}
\nc{\cA}{{\cal A}} \nc{\cB}{{\cal B}} \nc{\cC}{{\cal C}}
\nc{\cD}{{\cal D}} \nc{\cE}{{\cal E}} \nc{\cF}{{\cal F}}
\nc{\cG}{{\cal G}} \nc{\cH}{{\cal H}} \nc{\cI}{{\cal I}}
\nc{\cJ}{{\cal J}} \nc{\cK}{{\cal K}} \nc{\cL}{{\cal L}}
\nc{\cM}{{\cal M}} \nc{\cN}{{\cal N}} \nc{\cO}{{\cal O}}
\nc{\cP}{{\cal P}} \nc{\cQ}{{\cal Q}} \nc{\cR}{{\cal R}}
\nc{\cS}{{\cal S}} \nc{\cT}{{\cal T}} \nc{\cU}{{\cal U}}
\nc{\cV}{{\cal V}} \nc{\cW}{{\cal W}} \nc{\cX}{{\cal X}}
\nc{\cZ}{{\cal Z}}
\nc{\hA}{{\hat{A}}} \nc{\hB}{{\hat{B}}} \nc{\hC}{{\hat{C}}}
\nc{\hD}{{\hat{D}}} \nc{\hE}{{\hat{E}}} \nc{\hF}{{\hat{F}}}
\nc{\hG}{{\hat{G}}} \nc{\hH}{{\hat{H}}} \nc{\hI}{{\hat{I}}}
\nc{\hJ}{{\hat{J}}} \nc{\hK}{{\hat{K}}} \nc{\hL}{{\hat{L}}}
\nc{\hM}{{\hat{M}}} \nc{\hN}{{\hat{N}}} \nc{\hO}{{\hat{O}}}
\nc{\hP}{{\hat{P}}} \nc{\hR}{{\hat{R}}} \nc{\hS}{{\hat{S}}}
\nc{\hT}{{\hat{T}}} \nc{\hU}{{\hat{U}}} \nc{\hV}{{\hat{V}}}
\nc{\hW}{{\hat{W}}} \nc{\hX}{{\hat{X}}} \nc{\hZ}{{\hat{Z}}}
\nc{\hn}{{\hat{n}}}
\def\diag{\mathop{\rm diag}}
\def\dim{\mathop{\rm Dim}}
\def\min{\mathop{\rm min}}
\newcommand{\bra}[1]{\langle#1|}
\newcommand{\ket}[1]{|#1\rangle}
\newcommand{\ketbra}[2]{|#1\rangle\!\langle#2|}
\newcommand{\braket}[2]{\langle#1|#2\rangle}
\newcommand{\fl}[2]{\lfloor\frac{#1}{#2}\rfloor}
\def\Dbar{\leavevmode\lower.6ex\hbox to 0pt
	{\hskip-.23ex\accent"16\hss}D}
\begin{document}

\title{Strongly nonlocal unextendible product bases do exist}

	\author{Fei Shi}
	\thanks{Co-first-author}
	\affiliation{School of Cyber Security,
		University of Science and Technology of China, Hefei, 230026, People's Republic of China}
	
	\author{Mao-Sheng Li} 
		\thanks{Co-first-author}
	\affiliation{Department of Physics, Southern University of Science and Technology, Shenzhen 518055, People's Republic of China}
	\affiliation{Department of Physics, University of Science and Technology of China, Hefei 230026, People's Republic of China}
		
	\author{Mengyao Hu}
  \affiliation{LMIB(Beihang University), Ministry of Education, and School of Mathematical Sciences, Beihang University, Beijing 100191, People's Republic of China}

	\author{Lin Chen}
    \affiliation{LMIB(Beihang University), Ministry of Education, and School of Mathematical Sciences, Beihang University, Beijing 100191, People's Republic of China}
    \affiliation{International Research Institute for Multidisciplinary Science, Beihang University, Beijing 100191, People's Republic of China}
	
	\author{Man-Hong Yung}
	\affiliation{Department of Physics, Southern University of Science and Technology, Shenzhen 518055, People's Republic of China}
	\affiliation{Institute for Quantum Science and Engineering, and Department of Physics, Southern University of Science and Technology, Shenzhen, 518055, People's Republic of China}

	\author{Yan-Ling Wang}
\email[]{ wangylmath@yahoo.com}
\affiliation{School of Computer Science and Techonology, Dongguan University of Technology, Dongguan, 523808, People's Republic of China}	
	
	\author{Xiande Zhang}
\email[]{ drzhangx@ustc.edu.cn}
	\affiliation{School of Mathematical Sciences,
		University of Science and Technology of China, Hefei, 230026, People's Republic of China}
		

\maketitle
\small
\begin{abstract}
 	A set of multipartite orthogonal product states is locally irreducible, if it is not possible to eliminate one or more states from the set by orthogonality-preserving local measurements. An effective way to prove that a set is locally irreducible is to show that only trivial orthogonality-preserving local measurement can be performed to this set. In general, it is difficult to show that such an orthogonality-preserving local measurement must be trivial. In this work, we develop two basic techniques to deal with this problem. Using these techniques,    we successfully  show the existence of  unextendible product bases (UPBs) that are  locally irreducible in every bipartition in $d\otimes d\otimes d$ for any $d\geq 3$, and $3\otimes3\otimes 3$ achieves the minimum dimension for the existence of such
	UPBs. These UPBs exhibit the phenomenon of strong quantum nonlocality without entanglement. Our result solves an open question given by Halder \emph{et al.} [\href{https://journals.aps.org/prl/abstract/10.1103/PhysRevLett.122.040403}{Phys. Rev. Lett. \textbf{122}, 040403 (2019)}] and Yuan \emph{et al.} [\href{https://journals.aps.org/pra/abstract/10.1103/PhysRevA.102.042228}{Phys. Rev. A \textbf{102}, 042228 (2020)}].   It also sheds new light on the connections between UPBs and strong quantum nonlocality.
\end{abstract}

\section{Introduction}

Quantum state discrimination has attracted more and more attention in recent years.   Consider a composite quantum system prepared in a state from a known set. The discrimination task is to identify the state by performing measurements on the quantum system.  If it is not possible  to distinguish a set of orthogonal states by performing local operations and classical communications (LOCC),  then this set is said to be locally indistinguishable. The local  indistinguishability has wide applications in data hiding \cite{terhal2001hiding,divincenzo2002quantum,eggeling2002hiding,Matthews2009Distinguishability} and quantum secret sharing \cite{Markham2008Graph}.  Bennett \emph{et al.} firstly gave an example of  locally indistinguishable orthogonal product basis in the bipartite quantum system $3\otimes 3$ \cite{bennett1999quantum}, and it shows the phenomenon of quantum nonlocality without entanglement. Much effort has been devoted to the locally indistinguishable orthogonal product states and orthogonal entangled  states \cite{1,2,3,4,5,6,7,8,9,10,11,12,13,14,15,16,17}. A special class of locally indistinguishable sets called unextendible product bases (UPBs) stood out \cite{bennett1999quantum,de2004distinguishability,1}. A UPB is a set of orthogonal product states whose complementary space contains no product states. UPBs are connected to bound entangled states,  Bell inequalities without quantum violation, and  fermionic systems \cite{bennett1999unextendible,1,Tura2012Four, Chen2014Unextendible, Augusiak2012tight,augusiak2011bell}. Some results of the existence of UPBs with the minimum size were given in \cite{alon2001unextendible,Fen06,Joh13,Chen2013The}, and some explicit UPBs were constructed in \cite{bennett1999quantum,1,Johnston2014The,halder2019family,Agrawal2019Genuinely,Shi2020Unextendible,Bej}.

Recently, Halder \emph{et al.} showed the phenomenon of strong quantum nonlocality without entanglement \cite{Halder2019Strong}.  A set of orthogonal states is strongly nonlocal if it is locally irreducible in every bipartition.  They showed such a phenomenon by presenting two strongly nonlocal orthogonal product bases in $3\otimes 3\otimes 3$ and $4\otimes 4\otimes 4$, respectively. In Refs.~\cite{yuan2020strong,Chen2020,shi2021hyper}, the authors constructed some incomplete strongly nonlocal orthogonal product bases in tripartite systems, but these incomplete product bases are not UPBs.  In Refs.~\cite{2020Strong,li2}, the authors constructed some strongly nonlocal orthogonal entangled sets and bases in tripartite systems. Further, strong quantum nonlocality was  generalized to more general settings \cite{zhangstrong2019,li2020local,Sumit2019Genuinely}. Although many efforts have been  made in the strong quantum nonlocality, the existence of  strongly nonlocal UPBs remains unknown. This is also an open question in Refs.~\cite{Halder2019Strong} and \cite{yuan2020strong}.  It requires more techniques to tackle this problem.

In this paper,  we develop two  useful techniques in Lemmas~\ref{lem:zero} and \ref{lem:trivial}   to prove that a set of  orthogonal product states is strongly nonlocal. By utilizing such techniques, in Propositions~\ref{pro:upb333} and \ref{pro:444stronglyupb}, we successfully show the existence of  strongly nonlocal UPBs  in $3\otimes 3\otimes 3$ and $4\otimes 4\otimes 4$, respectively.  Based on the two strongly nonlocal UPBs, in Theorem~\ref{thm:dddstronglyupb}, we show
that strongly nonlocal UPBs do exist for any tripartite system $d\otimes d\otimes d$ ($d\geq 3$). 

 \section{Preliminaries}\label{sec:pre}
	In this paper, we only consider pure states and
	positive operator-valued measurement (POVM), and we do not normalize states and operators for simplicity.
	
	A set of orthogonal states in $d_1\otimes d_2\otimes \cdots\otimes d_n$ is \emph{locally indistinguishable}, if it is not possible to distinguish the states by using LOCC. A local measurement performed to distinguish a set of multipartite orthogonal states is called an orthogonality-preserving local measurement, if the postmeasurement states remain orthogonal. Further, a set of orthogonal states in $d_1\otimes d_2\otimes \cdots\otimes d_n$ is \emph{locally irreducible} if it is not
	possible to eliminate one or more states from the set by orthogonality-preserving local measurements \cite{Halder2019Strong}. Local irreducibility sufficiently ensures local indistinguishability. However, the converse is not true. For example, consider the following states in $3\otimes 4$,
	\begin{equation}\label{eq:comparing}
		\begin{aligned}
			\ket{\psi_1}&=\ket{0}_A(\ket{0}-\ket{1})_B, \ \ket{\psi_2}=(\ket{0}-\ket{1})_A \ket{2}_B,\\
			\ket{\psi_3}&=\ket{2}_A(\ket{1}-\ket{2})_B, \ \ket{\psi_4}=(\ket{1}-\ket{2})_A\ket{0}_B, \\
			\ket{\psi_5}&=(\ket{0}+\ket{1}+\ket{2})_A(\ket{0}+\ket{1}+\ket{2})_B, \\
			\ket{\psi_6}&=\ket{0}_A\ket{3}_B, \
			\ket{\psi_7}=\ket{1}_A\ket{3}_B,  \   \ket{\psi_8}=\ket{2}_A\ket{3}_B.
		\end{aligned}
	\end{equation}
	Since $\{\ket{\psi_i}\}_{i=1}^{8}$ is a UPB in $3\otimes 4$, $\{\ket{\psi_i}\}_{i=1}^{8}$  is locally  indistinguishable \cite{de2004distinguishability}. However, Bob can perform the measurement  $\{\ket{3}_B\bra{3}, \bbI-\ket{3}_B\bra{3}\}$ to eliminate
	$\{\ket{\psi_i}\}_{i=1}^5$ and $\{\ket{\psi_i}\}_{i=6}^8$, respectively. 
	
	In  $d_1\otimes d_2\otimes\cdots \otimes d_n$, $n\geq 3$ and $d_i\geq 3$, a set of orthogonal product states is \emph{strongly nonlocal} if it is locally irreducible in every bipartition of the subsystems, which shows the phenomenon of strong quantum nonlocality without entanglement \cite{Halder2019Strong}.  The authors in Refs.~\cite{Halder2019Strong,yuan2020strong} also proposed an open question. Whether one can find a strongly nonlocal UPB?  We shall give a positive answer to this open question.
	
	There exists a sufficient condition for showing a set to be locally irreducible. If an  orthogonality-preserving POVM on any of the subsystems is trivial (a measurement is trivial if all
	the POVM elements are proportional to the identity operator), then the set of states is locally irreducible. Throughout this paper, we study the strongest form of nonlocality under the following definition.  A set of orthogonal product states is said to be of  \emph{the strongest nonlocality} if only trivial orthogonality-preserving POVM can be performed for each bipartition of the subsystems. In fact, all known strongly nonlocal sets of tripartite systems  are of the strongest nonlocality. Now  we present two basic lemmas which are   useful for showing  the strongest nonlocality.

\section{Two Basic Lemmas}\label{sec:twolemmas}
	For any positive integer $n\geq 2$, we denote $\bbZ_{n}$ as the set $\{0,1,\cdots,n-1\}$, and let $w_n:=e^{\frac{2\pi \sqrt{-1}}{n}}$.	Let $\cH_n$ be an $n$ dimensional Hilbert space. We always assume that $\{|0\rangle, |1\rangle, \cdots,|n-1\rangle\}$ is the computational basis of $\cH_n$. For any operator $M$ on $\cH_n$, we denote the matrix $M$ as the matrix representation of the operator $M$ under the computational basis. In general, we do not distinguish the operator $M$ and the  matrix $M$. Given any $n\times n$ matrix $E:=\sum_{i=0}^{n-1}\sum_{j=0}^{n-1} a_{i,j}|i\rangle\langle j|$, for $\mathcal{S},\mathcal{T}\subseteq \{|0\rangle,|1\rangle,\cdots, |n-1\rangle\}$, we define
	\begin{equation*}
		{}_\mathcal{S}E_{\mathcal{T}}:=\sum_{|s\rangle \in \mathcal{S}}\sum_{|t\rangle \in \mathcal{T}}a_{s,t} |s\rangle\langle t|.
	\end{equation*}
	It means that ${}_\mathcal{S}E_{\mathcal{T}}$ is a submatrix of $E$ with  row coordinates $\mathcal{S}$ and column coordinates $\mathcal{T}$.  In the case $\cS=\cT$, we denote $E_{\cS}:={}_{\cS}E_{\cS}$ for simplicity. Now, we give two basic lemmas whose proofs will be given in   Appendix~\ref{Appendix:zero_proof}.

	\begin{lemma}[Block Zeros Lemma]\label{lem:zero}
		Let  an  $n\times n$ matrix $E=(a_{i,j})_{i,j\in\bbZ_n}$ be the matrix representation of an operator  $E=M^{\dagger}M$  under the basis  $\cB:=\{\ket{0},\ket{1},\ldots,\ket{n-1}\}$. Given two nonempty disjoint subsets $\cS$ and $\cT$ of $\cB$, assume  that  $\{\ket{\psi_i}\}_{i=0}^{s-1}$, $\{\ket{\phi_j}\}_{j=0}^{t-1}$ are two orthogonal sets  spanned by $\cS$ and $\cT$ respectively, where $s=|\cS|,$ and $t=|\cT|.$  If  $\langle \psi_i| E| \phi_j\rangle =0$
		for any $i\in \mathbb{Z}_s,j\in\mathbb{Z}_t$(we call these zero conditions), then   ${}_\mathcal{S}E_{\mathcal{T}}=\mathbf{0}$  and  ${}_\mathcal{T}E_{\mathcal{S}}=\mathbf{0}$.
	\end{lemma}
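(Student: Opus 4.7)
The plan is a short linear-algebra argument that avoids any manipulation of the operator $M$ itself and works purely at the level of the bilinear form $(u,v)\mapsto \langle u|E|v\rangle$. First I would observe that $\{|\psi_i\rangle\}_{i=0}^{s-1}$ is an orthogonal (hence linearly independent, after discarding any zero vector, which cannot occur for the hypothesis to be non-vacuous) set of exactly $s=|\cS|$ vectors lying in the $s$-dimensional space $\lin(\cS)$, so it is in fact an orthogonal basis of $\lin(\cS)$. The same reasoning shows that $\{|\phi_j\rangle\}_{j=0}^{t-1}$ is an orthogonal basis of $\lin(\cT)$. This is the one place where counting matters, and it is where the full strength of the ``spanned by $\cS$'' hypothesis is used.

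Next I would lift the zero conditions from the $\psi_i$'s and $\phi_j$'s to arbitrary pairs in $\cS\times\cT$ by sesquilinearity. Expand each $|s\rangle\in\cS$ and $|t\rangle\in\cT$ in the two bases as $|s\rangle=\sum_{i}c^{(s)}_i|\psi_i\rangle$ and $|t\rangle=\sum_{j}d^{(t)}_j|\phi_j\rangle$. Then
\begin{equation*}
\langle s|E|t\rangle=\sum_{i,j}\overline{c^{(s)}_i}\,d^{(t)}_j\,\langle\psi_i|E|\phi_j\rangle=0,
\end{equation*}
where the last equality uses the assumed zero conditions termwise. Since $\langle s|E|t\rangle=a_{s,t}$ by definition of the matrix representation, this yields $a_{s,t}=0$ for every $|s\rangle\in\cS$ and $|t\rangle\in\cT$, which is precisely the claim ${}_\cS E_\cT=\mathbf{0}$.

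Finally, for ${}_\cT E_\cS=\mathbf{0}$, I would invoke the hypothesis $E=M^\dagger M$, which forces $E$ to be Hermitian. Therefore $a_{t,s}=\overline{a_{s,t}}=0$ for the same pairs, and the second block vanishes as well. There is no real obstacle here; the only subtle point is the dimension-count step that upgrades the given orthogonal set into a basis of the relevant coordinate subspace, after which the conclusion is immediate from bilinearity and Hermiticity.
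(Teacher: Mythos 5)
Your proof is correct and follows essentially the same route as the paper's: upgrade the orthogonal sets to bases of $\lin(\cS)$ and $\lin(\cT)$ by dimension counting, extend the zero conditions to all pairs in $\cS\times\cT$ by sesquilinearity, and obtain the transposed block from the Hermiticity of $E=M^{\dagger}M$. No gaps.
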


	\begin{lemma}[Block Trivial  Lemma]\label{lem:trivial}
		Let  an  $n\times n$ matrix $E=(a_{i,j})_{i,j\in\bbZ_n}$ be the matrix representation of an operator  $E=M^{\dagger}M$  under the basis  $\cB:=\{\ket{0},\ket{1},\ldots,\ket{n-1}\}$. Given a nonempty  subset $\cS:=\{\ket{u_0},\ket{u_1},\ldots,\ket{u_{s-1}}\}$  of $\cB$, let $\{\ket{\psi_j} \}_{j=0}^{s-1}$ be an orthogonal  set spanned by $\cS$.     Assume that $\langle \psi_i|E |\psi_j\rangle=0$ for any $i\neq j\in \mathbb{Z}_s$.  If there exists a state $|u_t\rangle \in\cS$,  such that $ {}_{\{|u_t\rangle\}}E_{\cS\setminus \{|u_t\rangle\}}=\mathbf{0}$   and $\langle u_t|\psi_j\rangle \neq 0$  for any $j\in \mathbb{Z}_s$,   then  $E_{\cS}\propto \mathbb{I}_{\cS}$. (Note that if we consider $\{\ket{\psi_j} \}_{j=0}^{s-1}$ as the Fourier basis, i.e. $\ket{\psi_j}=\sum_{i=0}^{s-1}w_s^{ij}\ket{u_i}$ for  $j\in \mathbb{Z}_s$, then it must have $\langle u_t|\psi_j\rangle \neq 0$  for any $j\in \mathbb{Z}_s$).
	\end{lemma}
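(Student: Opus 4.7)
The plan is to recognize that the two hypotheses, taken together, force $E$ restricted to $\lin(\cS)$ to have every $\ket{\psi_j}$ as an eigenvector with a \emph{common} eigenvalue, which is precisely the condition $E_\cS\propto \bbI_\cS$. The crucial conceptual step is that the nonzero-coefficient hypothesis $\braket{u_t}{\psi_j}\neq 0$ is what bridges the ``$\{\ket{\psi_j}\}$-diagonal'' picture of $E$ and the ``zero $\ket{u_t}$-row'' picture of $E$.

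First, I would note that the $s$ mutually orthogonal nonzero vectors $\{\ket{\psi_j}\}_{j=0}^{s-1}$ form a basis of the $s$-dimensional space $\lin(\cS)$, so after a harmless rescaling they may be treated as orthonormal; the coefficients $c_{kj} := \braket{u_k}{\psi_j}$ then encode both expansions $\ket{u_k} = \sum_j \overline{c_{kj}}\ket{\psi_j}$ and $\ket{\psi_j} = \sum_l c_{lj}\ket{u_l}$. Computing $\bra{u_k}E\ket{\psi_j}$ in \emph{two} different ways gives the key identity. On one hand, using the $\psi$-expansion of $\ket{u_k}$ together with $\bra{\psi_i}E\ket{\psi_j}=0$ for $i\neq j$, one obtains $\bra{u_k}E\ket{\psi_j}=c_{kj}\,\mu_j$ where $\mu_j := \bra{\psi_j}E\ket{\psi_j}$. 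On the other hand, specialising to $k=t$ and using the $u$-expansion of $\ket{\psi_j}$ together with the vanishing off-diagonal $\ket{u_t}$-row, one obtains $\bra{u_t}E\ket{\psi_j}=c_{tj}\,\bra{u_t}E\ket{u_t}$. The hypothesis $c_{tj}\neq 0$ then permits cancellation and forces $\mu_j = \bra{u_t}E\ket{u_t}$ to be independent of $j$.

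With all the $\mu_j$ equal to a common $\mu$, I would finish by reassembling an arbitrary entry via
\[
\bra{u_k}E\ket{u_l} \;=\; \mu\sum_j \overline{c_{lj}}\,c_{kj} \;=\; \mu\,\braket{u_k}{u_l} \;=\; \mu\,\delta_{kl},
\]
where the middle step uses the completeness relation $\sum_j\ket{\psi_j}\bra{\psi_j} = P_\cS$ applied to $\ket{u_l}\in \cS\subseteq\lin(\cS)$, and the last equality uses orthonormality of $\cB$. This yields $E_\cS = \mu\,\bbI_\cS$. There is no real technical obstacle; the only subtlety is to \emph{see} that the hypothesis $\braket{u_t}{\psi_j}\neq 0$ is not cosmetic but precisely the lever required to combine the two seemingly unrelated zero conditions into a single statement about the spectrum of $E_\cS$. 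The parenthetical Fourier remark is then a reassurance that in the applications to come, with $\ket{\psi_j} = \sum_i w_s^{ij}\ket{u_i}$, each such coefficient has modulus one, so this bridging hypothesis holds automatically.
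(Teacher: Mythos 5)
Your proposal is correct and is essentially the paper's own argument in Dirac notation: your two evaluations of $\bra{u_t}E\ket{\psi_j}$ are exactly the paper's comparison of the $t$-th rows of $E_{\cS}H$ and $H\,\mathrm{diag}(\alpha_0,\ldots,\alpha_{s-1})$, with your $\mu_j$ playing the role of the $\alpha_j$ and the hypothesis $\braket{u_t}{\psi_j}\neq 0$ used in the same way to cancel and force a common eigenvalue. The only cosmetic difference is at the end, where the paper writes $E_{\cS}=H\,\mathrm{diag}(a_{t,t},\ldots,a_{t,t})H^{\dagger}$ while you reassemble the entries via the completeness relation; these are the same computation.
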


\section{UPBs of the strongest  nonlocality in tripartite systems}\label{sec:stronglyUPB}

	All of our UPBs in $d\otimes d\otimes d$ for $d\geq 3$ are from Ref.~ \cite{Agrawal2019Genuinely}. Although there is a statement ``For example in \cite{Halder2019Strong}, the authors
	introduced a new concept called strong quantum nonlocality
	without entanglement. While we have checked that the present UPB does not exhibit such phenomena'' claimed in Sec. VI of Ref.~\cite{Agrawal2019Genuinely}, we will show that these UPBs do exhibit the phenomenon of strong quantum nonlocality without entanglement. In fact, they are of the strongest nonlocality.
	One should note that any of these UPBs  is invariant under  the cyclic permutation of the parties. In order to show that any of these UPBs is of  the strongest nonlocality, we only need to show that   $BC$ party could only perform a trivial orthogonality-preserving POVM by Lemma~\ref{lem:cyc} in Appendix~\ref{Appendix:2lemmas}.
	
	
	It is known that any set of orthogonal product states in $2\otimes n$ is locally distinguishable \cite{1}. A necessary condition for the existence of a  strongly nonlocal UPB in $d_1\otimes d_2\otimes d_3$ is that $d_1,d_2,d_3\geq 3$. Therefore,  the possible minimum quantum system for the existence of such a UPB is the three-qutrit system.   The following set of states is a UPB in $3\otimes 3\otimes 3$ (See Ref. \cite{Agrawal2019Genuinely}).   Let
	
	\begin{equation}\label{eq:upb333}
		\begin{aligned}
			\cA_1&:=\{\ket{\xi_j}_A\ket{0}_B\ket{\eta_i}_C\mid (i,j)\in\bbZ_2\times \bbZ_2\setminus\{(0,0)\}\},\\
			\cA_2&:=\{\ket{\xi_j}_A\ket{\eta_i}_B\ket{2}_C\mid (i,j)\in\bbZ_2\times \bbZ_2\setminus\{(0,0)\}\},\\
			\cA_3&:=\{\ket{2}_A\ket{\xi_j}_B\ket{\eta_i}_C\mid (i,j)\in\bbZ_2\times \bbZ_2\setminus\{(0,0)\}\},\\
			\cB_1&:=\{\ket{\eta_i}_A\ket{2}_B\ket{\xi_j}_C\mid (i,j)\in\bbZ_2\times \bbZ_2\setminus\{(0,0)\}\},\\
			\cB_2&:=\{\ket{\eta_i}_A\ket{\xi_j}_B\ket{0}_C\mid (i,j)\in\bbZ_2\times \bbZ_2\setminus\{(0,0)\}\},\\
			\cB_3&:=\{\ket{0}_A\ket{\eta_i}_B\ket{\xi_j}_C\mid (i,j)\in\bbZ_2\times \bbZ_2\setminus\{(0,0)\}\}, \\
			\ket{S}&=\left(\sum_{i=0}^2\ket{i}\right)_A\left(\sum_{j=0}^2\ket{j}\right)_B\left(\sum_{k=0}^2\ket{k}\right)_C,
		\end{aligned}
	\end{equation}
	where $\ket{\eta_i}=\ket{0}+(-1)^{i}\ket{1}$, $\ket{\xi_j}=\ket{1}+(-1)^{j}\ket{2}$, for $i,j\in\bbZ_2$. We find  that the above UPB is of  the strongest nonlocality.
	
		\begin{figure}[h]
	\centering
	\includegraphics[scale=0.7]{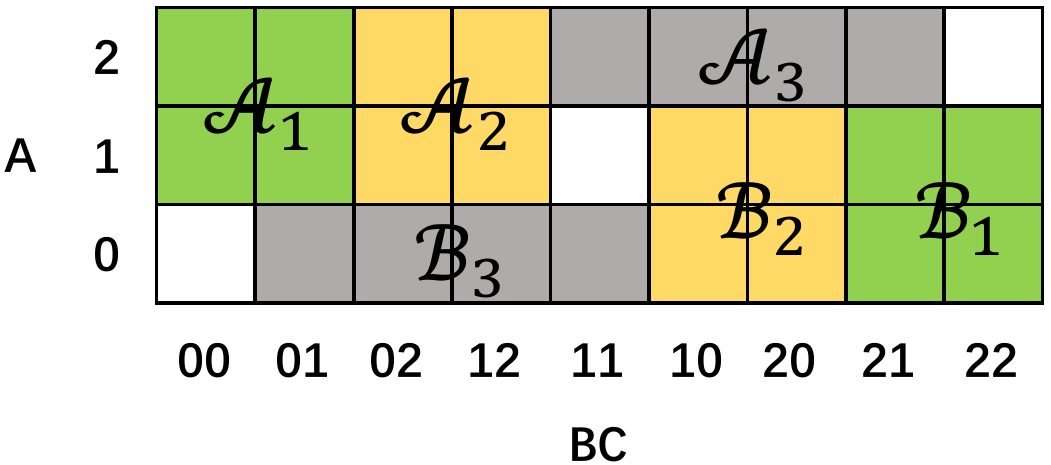}
	\caption{The corresponding $3\times 9$ grid of  $\cup_{i=1}^3\{\cA_i,\cB_i\}$ given by Eq. \eqref{eq:upb333} in $A|BC$ bipartition.  For example, $\cA_1$ corresponds to the $2\times 2$ grid $\{(1,2)\times (00,01)\}$. Moreover,  $\cA_i$ is symmetrical to $\cB_i$ for $1\leq i\leq 3$.   }  \label{fig:tite39}
\end{figure}
	
	\begin{proposition}\label{pro:upb333}
		In $3\otimes 3\otimes 3$, the set $\{\cup_{i=1}^3\{\cA_i,\cB_i\}\cup\{\ket{S}\}\}$ given by Eq.~\eqref{eq:upb333} is a UPB of  the strongest nonlocality. The size of this set is $19$.
	\end{proposition}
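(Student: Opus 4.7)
The size is immediate: $|\bbZ_2\times\bbZ_2\setminus\{(0,0)\}|=3$, so the six families each contribute three states, and together with $\ket{S}$ one counts $6\cdot 3+1=19$. That the set forms a UPB is precisely what was established in Ref.~\cite{Agrawal2019Genuinely}, so the only remaining thing to verify is the strongest-nonlocality claim. Because the whole construction is invariant under the cyclic permutation $(A,B,C)\mapsto(B,C,A)$, the cyclic-symmetry lemma (Lemma~\ref{lem:cyc}) of Appendix~\ref{Appendix:2lemmas} reduces the task to showing that every orthogonality-preserving POVM performed on the joint party $BC$ is trivial. The plan is therefore to fix one such POVM element $E=M^{\dg}M$, view it as a $9\times 9$ matrix in the computational basis $\{\ket{ij}:i,j\in\bbZ_3\}$, and prove $E\propto\bbI_9$.

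The basic observation is that two UPB states $\ket{\alpha}_A\ket{\beta}_{BC}$ and $\ket{\alpha'}_A\ket{\beta'}_{BC}$ yield a zero condition $\bra{\beta}E\ket{\beta'}=0$ whenever $\braket{\alpha}{\alpha'}\neq 0$. The $3\times 9$ grid of Fig.~\ref{fig:tite39} organises these constraints by displaying, for each subset $\cA_i,\cB_i$, which columns of $\{\ket{ij}\}$ are ``occupied''. For each ordered pair of disjoint column-sets $(\cS,\cT)$ coming from two of the six families, we would read off from the corresponding UPB states enough non-orthogonal $A$-pairs to furnish the orthogonal sets required by Lemma~\ref{lem:zero}, and thereby conclude ${}_\cS E_\cT=\mathbf{0}$. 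The $\ket{\eta_i}$ and $\ket{\xi_j}$ ingredients are tailored for exactly this: every family of the form $\{\ket{\eta_i}\ket{\xi_j}:(i,j)\neq(0,0)\}$ packages three product states whose orthogonality relations distribute across the relevant $2\times 2$ block, which is precisely the input Lemma~\ref{lem:zero} consumes. Once the off-diagonal blocks vanish, we would apply Lemma~\ref{lem:trivial} to each diagonal block, using these same families as Fourier-type orthogonal sets (all of whose computational coefficients are nonzero), to conclude that each block is a scalar multiple of the identity.

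The main obstacle will be the final gluing step: the various applications of Lemma~\ref{lem:trivial} produce a priori distinct scalars on the different diagonal blocks, and they must all be shown to coincide. The natural tool is the totaliser state $\ket{S}$, whose $A$-component $\ket{0}+\ket{1}+\ket{2}$ is non-orthogonal to every computational $A$-basis vector; pairing $\ket{S}$ with any other UPB state produces a zero condition between $\ket{+}_B\ket{+}_C$ and a column-supported vector in $BC$, and these ``global'' constraints, combined with the overlapping indices (such as $\ket{11}$, which lies simultaneously in the blocks tied to $\cA_3$ and to $\cB_3$), should force all block constants to agree. Once $E=\lambda\bbI_9$ is established, the POVM element is a scalar, the POVM is trivial, and the $A|BC$ bipartition is locally irreducible; cyclic invariance then delivers the other two bipartitions and completes the proof.
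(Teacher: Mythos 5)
Your overall architecture matches the paper's: reduce to the $BC$ party via Lemma~\ref{lem:cyc}, block-diagonalize $E$ with Lemma~\ref{lem:zero}, make the diagonal blocks scalar with Lemma~\ref{lem:trivial}, and glue the scalars using $\ket{S}$ and overlapping supports. However, there are two concrete gaps at exactly the technically hard points. First, your block-diagonalization step --- ``for each ordered pair of disjoint column-sets coming from two of the six families, read off enough non-orthogonal $A$-pairs to furnish the orthogonal sets required by Lemma~\ref{lem:zero}'' --- fails for any pair involving $\cA_3$ or $\cB_3$. Lemma~\ref{lem:zero} requires an orthogonal set of $s=|\cS|$ states spanning $\cS$, but $\cA_3^{(A)}=\{\ket{10},\ket{11},\ket{20},\ket{21}\}$ is $4$-dimensional while $\cA_3$ supplies only the three Fourier vectors with $(i,j)\neq(0,0)$; the missing $(0,0)$ vector is not available at this stage ($\ket{S}$'s $BC$-part is not supported on $\cA_3^{(A)}$, so it cannot complete the basis until after the block structure is known). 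Consequently the central cell $\ket{1}_B\ket{1}_C$, which lies in no corner block, is never decoupled by your method. The paper handles it by a chained argument: it first establishes the four corner cross-blocks are zero, then uses single states such as $\ket{2}_A\ket{\xi_1}_B\ket{\eta_0}_C\in\cA_3$ against $\cA_1$ and $\cA_2$, exploiting the already-known zeros to reduce $\bra{\xi_1}\bra{\eta_0}E\ket{0}\ket{\eta_i}=0$ to $\bra{11}E\ket{0}\ket{\eta_i}=0$.

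Second, your plan to ``apply Lemma~\ref{lem:trivial} to each diagonal block, using these same families as Fourier-type orthogonal sets'' does not work for the $2\times 2$ corner blocks. For $E_{\cA_1^{(A)}}$ the family $\cA_1(\ket{\xi_1}_A)=\{\ket{0}_B\ket{\eta_i}_C\}_{i\in\bbZ_2}$ yields only the two conditions $a_{00,00}=a_{01,01}$ and $a_{00,01}=a_{01,00}$; it does \emph{not} give $a_{00,01}=0$, and the hypothesis ${}_{\{|u_t\rangle\}}E_{\cS\setminus\{|u_t\rangle\}}=\mathbf{0}$ of Lemma~\ref{lem:trivial} is unavailable for this block. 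The paper's Step~2 obtains $a_{00,01}=0$ by subtracting two separate zero conditions --- one from $\ket{S}$ against $\ket{0}_A\ket{\eta_1}_B\ket{\xi_1}_C$ and one internal to $\cB_3$ --- and only \emph{then} can Lemma~\ref{lem:trivial} be applied, namely to $\cB_3^{(A)}$ with the special row $\ket{0}_B\ket{1}_C\in\cA_1^{(A)}\cap\cB_3^{(A)}$ and with the $(0,0)$ Fourier vector supplied by $\ket{S}$ through the established block structure. Your proposal correctly senses that $\ket{S}$ and the overlaps are needed for gluing, but the ordering matters: the corner blocks must be made scalar by an ad hoc computation before Lemma~\ref{lem:trivial} becomes applicable anywhere. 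As written, the proposal would stall at both of these steps.
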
	
   	\begin{proof}
    	The six subsets $ \cA_i,\cB_i  (i=1,2,3)$ in $A|BC$ bipartition correspond to the six blocks of the $3\times 9$ grid in Fig.~\ref{fig:tite39}.   For example, $\cA_1$ corresponds to the block containing the $2\times 2$ grid $\{(1,2)\times(00,01)\}$ in Fig.~\ref{fig:tite39}.  Moreover,  $\cA_i$ is symmetrical to $\cB_i$ for $1\leq i\leq 3$.  Let $B$ and $C$ come together to  perform a joint  orthogonality-preserving POVM $\{E=M^{\dagger}M\}$, where $E=(a_{ij,k\ell})_{i,j,k,\ell\in \bbZ_3}$. Then the postmeasurement states $\{\bbI\otimes M\ket{\psi}\mid\ket{\psi}\in \{\cup_{i=1}^3\{\cA_i,\cB_i\}\cup\{\ket{S}\}\}\}$ should be mutually orthogonal.

    Assume that $\ket{\psi_1}_A\ket{\psi_2}_B\ket{\psi_3}_C$, $\ket{\varphi_1}_A\ket{\varphi_2}_B\ket{\varphi_3}_C\in  \{\cup_{i=1}^{3}\{\cA_i,\cB_i\} \cup\{\ket{S}\}\}$. Then
   $$ \begin{aligned}
        &{}_A\bra{\psi_1}{}_B\bra{\psi_2}{}_C\bra{\psi_3}\bbI_A\otimes E\ket{\varphi_1}_A\ket{\varphi_2}_B\ket{\varphi_3}_C\\
        =&\braket{\psi_1}{\varphi_1}_A({}_B\bra{\psi_2}{}_C\bra{\psi_3} E\ket{\varphi_2}_B\ket{\varphi_3}_C)=0.
  	\end{aligned}
  	$$
  	If $\braket{\psi_1}{\varphi_1}_A\neq 0$, then ${}_B\bra{\psi_2}{}_C\bra{\psi_3} E\ket{\varphi_2}_B\ket{\varphi_3}_C)=0$.
    By using this property, we need to show that $E\propto \bbI$. 
    			
    First of all, we need to introduce some notations. Let $\mathcal{S}=\{\ket{\psi_1}_A\ket{\psi_2}_B\ket{\psi_3}\}$ be a tripartite orthogonal product set. Define
    \begin{equation*}
    \mathcal{S}(|\psi\rangle_A):=\{ |\psi_2\rangle_B|\psi_3\rangle_C   \mid   |\psi\rangle_A |\psi_2\rangle_B|\psi_3\rangle_C \in  \mathcal{S}\}.
   \end{equation*}
   Moreover, define  $\mathcal{S}^{(A)}=\{\ket{j}_B\ket{k}_C\mid j,k\in\bbZ_n\}$ as the support of $\mathcal{S}(|\psi\rangle_A)$  which spans  $\mathcal{S}(|\psi\rangle_A)$. 
    For example, in Eq.~\eqref{eq:upb333}, $\cA_1:=\{\ket{\xi_j}_A\ket{0}_B\ket{\eta_i}_C\mid (i,j)\in\bbZ_2\times \bbZ_2\setminus\{(0,0)\}  \}$. Then $\cA_1(\ket{\xi_1}_A)=\{\ket{0}_A\ket{\eta_i}_C\}_{i\in \bbZ_2}$, $\cA_1^{(A)}=\{\ket{0}_B\ket{0}_C,\ket{0}_B\ket{1}_C\}$, and $\cA_1(\ket{\xi_1}_A)$ is spanned by $\cA_1^{(A)}$. Actually,  $\{\cA_i^{(A)},\cB_i^{(A)}\}_{i=1}^3$   can be easily observed by Fig.~\ref{fig:tite39}. They are the projection sets of $\{\cA_i,\cB_i\}_{i=1}^3$ in $BC$ party in Fig.~\ref{fig:tite39}.

   	\begin{figure}[t]
   	\centering
   	\includegraphics[scale=0.4]{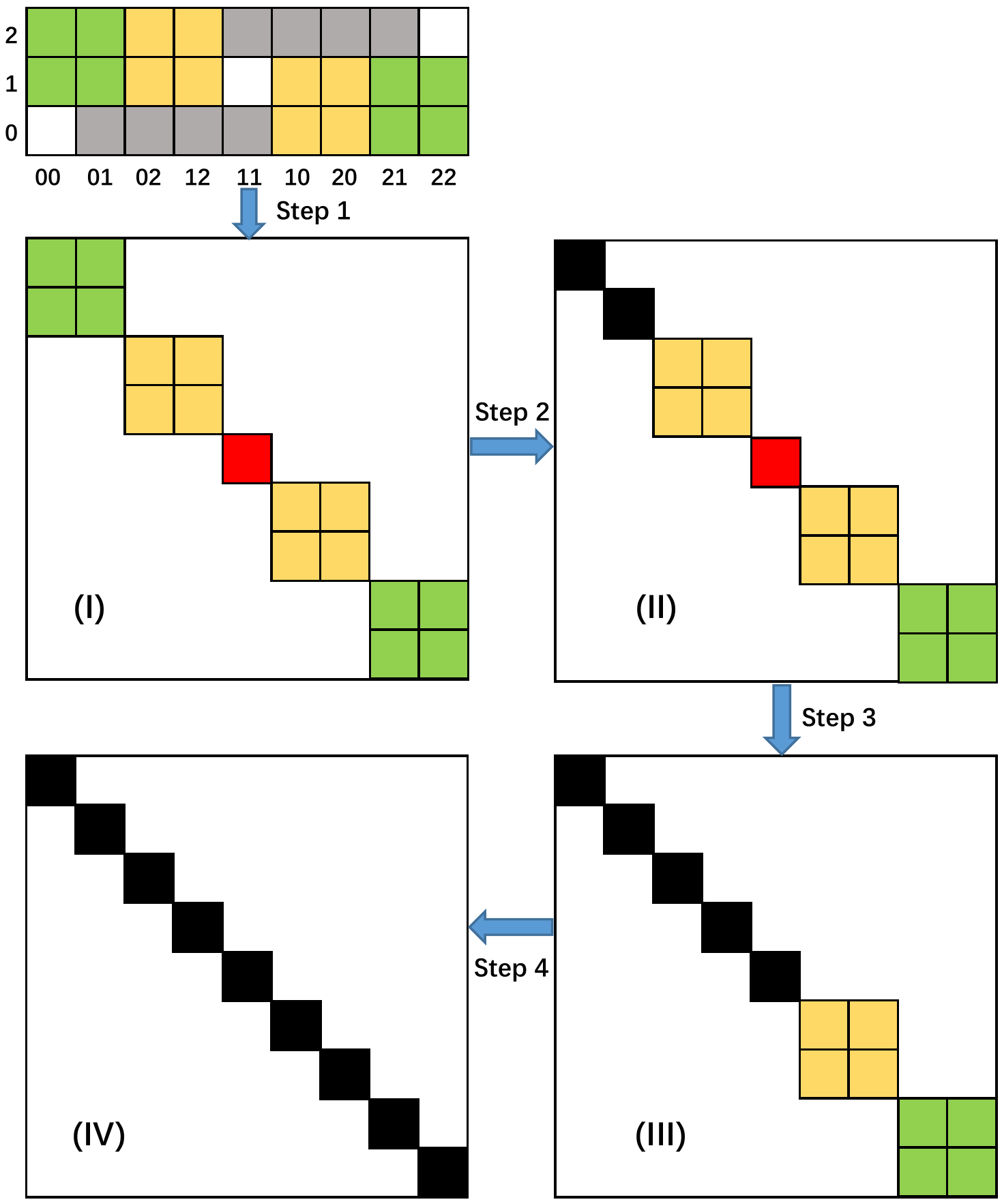}
   	\caption{Proving steps for the strongly nonlocal UPB in $3\otimes 3 \otimes 3$.  }  \label{fig:tite333}
   \end{figure}

   \noindent{\bf Step 1} Since $\ket{\xi_1}_A, \ket{\eta_1}_A$ are non-orthogonal, applying Lemma~\ref{lem:zero} to any two elements of $\{\cA_{1}(\ket{\xi_1}_A),$ $\cA_{2}(\ket{\xi_1}_A),$ $\cB_{2}(\ket{\eta_1}_A),$ $ \cB_{1}(\ket{\eta_1}_A)\}$, we obtain
\begin{equation}\label{eq:four_orthogonal}
\begin{array}{cc}
    {}_{\cA_{i}^{(A)}}E_{\cA_{j}^{(A)}}=\mathbf{0},& {}_{\cA_{i}^{(A)}}E_{\cB_{k}^{(A)}}=\mathbf{0}, \\
     {}_{\cB_{k}^{(A)}}E_{\cB_{\ell}^{(A)}}=\mathbf{0}, & {}_{\cB_{k}^{(A)}}E_{\cA_{i}^{(A)}}=\mathbf{0},
\end{array}
\end{equation}
for $1\leq i\neq j\leq 2$, $1\leq k\neq \ell\leq 2$.
    			
   Next, by using the states $\ket{2}_A\ket{\xi_1}_B\ket{\eta_0}_C\in\cA_3$ and $\{\ket{\xi_1}_A\ket{0}_B\ket{\eta_i}_C\}_{i\in\bbZ_2}\subset\cA_1$, we have
   \begin{equation}
    0={}_B\bra{\xi_1}{}_C\bra{\eta_0}E\ket{0}_B\ket{\eta_i}_C={}_B\bra{1}{}_C\bra{1}E\ket{0}_B\ket{\eta_i}_C,   
   \end{equation}
   	for $ i\in \bbZ_2$. Applying Lemma~\ref{lem:zero} to $\{\ket{1}_B\ket{1}_C\}$ and $\{\ket{0}_B\ket{\eta_i}_C\}_{i\in\bbZ_2}$, we obtain $a_{11,00}=a_{11,01}=0$. In the same way, we can show that $a_{11,02}=a_{11,12}=0$ by using the states $\ket{2}_A\ket{\xi_1}_B\ket{\eta_0}_C\in\cA_3$ and $\{\ket{\xi_1}_A\ket{\eta_i}_B\ket{2}_C\}_{i\in\bbZ_2}\subset\cA_2$. By the symmetry of Fig.~\ref{fig:tite39}, we can also show that $a_{11,10}=a_{11,20}=a_{11,21}=a_{11,22}=0$. Since $E^\dagger=E$,  $E$ is a block diagonal matrix. It can be expressed by
    \begin{equation}\label{eq:matrix333}
   	E=E_{\cA_1^{(A)}}\oplus E_{\cA_2^{(A)}}\oplus E_{\{\ket{1}_B\ket{1}_C\}}\oplus E_{\cB_2^{(A)}}\oplus E_{\cB_1^{(A)}}.
    \end{equation}
   	The intuitive figure of $E$ can be shown in Fig.~\ref{fig:tite333} (I).	
   			
    \noindent{\bf Step 2} By using the states $\{\ket{\xi_1}_A\ket{\eta_i}_B\ket{2}_C\}_{i\in\bbZ_2}\subset\cA_2$, we have ${}_B\bra{\eta_0}{}_C\bra{2} E\ket{\eta_1}_B\ket{2}_C={}_B\bra{\eta_1}{}_C\bra{2} E\ket{\eta_0}_B\ket{2}_C=0$. It implies $a_{02,12}=a_{12,02}$. Moreover, by using the states   $\ket{S}$ and $\{\ket{0}_A\ket{\eta_i}_B\ket{\xi_j}_C\}_{(i,j)\in\bbZ_2\times \bbZ_2\setminus\{(0,0)\}}=\cB_3$, we have
    \begin{equation}
    \left\{\begin{array}{ccl}
    0&=&{}_B\left(\sum_{j=0}^2\langle j|\right){}_C\left(\sum_{k=0}^2 \langle k |\right)E\ket{\eta_1}_B\ket{\xi_1}_C\\[2mm]
    &=&a_{01,01}-a_{02,02}+a_{12,12}-a_{11,11}+a_{00,01},\\[3mm]
    0&=&{}_B\bra{\eta_0}{}_C\bra{\xi_1} E\ket{\eta_1}_B\ket{\xi_0}_C\\[2mm]
    &=&a_{01,01}-a_{02,02}+a_{12,12}-a_{11,11}.
    \end{array}
    \right.
    \end{equation}
    It implies $a_{00,01}=0$.  By using the states $\{\ket{\xi_1}_A\ket{0}_B\ket{\eta_i}_C\}_{i\in\bbZ_2}\subset\cA_1$, we have
    ${}_B\bra{0}{}_C\bra{\eta_0} E\ket{0}_B\ket{\eta_1}_C=0$. It implies $a_{00,00}=a_{01,01}$. Thus
    \begin{equation}\label{eq:eA_1333}
    E_{\cA_1^{(A)}}=k\bbI_{\cA_1^{(A)}}.
    \end{equation}
      The intuitive figure of $E$ can be shown in Fig.~\ref{fig:tite333} (II).				
   
   		\noindent{\bf Step 3}
   Considering    $\{\ket{0}_A\ket{\eta_i}_B\ket{\xi_j}_C\}_{(i,j)\in\mathbb{Z}_2\times\mathbb{Z}_2\setminus}$ ${}_{ \{(0,0)\}}=\cB_3$ and $\ket{S}$. By using Eqs.~\eqref{eq:matrix333}  and \eqref{eq:eA_1333}, we have  the following equality
$$
   \begin{aligned}
   &\sum_{s=0}^1\sum_{t=0}^1 {}_B \bra{s} {}_C\bra{t+1} E \ket{\eta_i}_B\ket{\xi_j}_C\\
   =&\sum_{s=0}^2\sum_{t=0}^2 {}_B \bra{s} {}_C\bra{t} E \ket{\eta_i}_B\ket{\xi_j}_C=0.
   \end{aligned}
$$
   Moreover,  we have
   \begin{equation}
   \sum_{s=0}^1\sum_{t=0}^1\ket{s}_B\ket{t+1}_C=\ket{\eta_0}_B\ket{\xi_0}_C.
   \end{equation}
   Therefore, by using the states $\{\ket{S}\}\cup\{\ket{0}_A\ket{\eta_i}_B\ket{\xi_j}_C\}_{(i,j)\in\bbZ_2\times \bbZ_2\setminus\{(0,0)\}}  $, we have
   \begin{equation}
   {}_B\bra{\eta_k}_C\bra{\xi_\ell}E\ket{\eta_i}_B\ket{\xi_j}_C=0,  
   \end{equation}
 for $(k,\ell)\neq (i,j)\in\bbZ_2\times\bbZ_2$. By Eqs.~\eqref{eq:matrix333}  and \eqref{eq:eA_1333}, we know that  ${}_{\{\ket{0}_B\ket{1}_C\}}E_{\cB_3^{(A)}\setminus \{\ket{0}_B\ket{1}_C\}}=\textbf{0}$. Moreover, ${}_B\bra{0}{}_C\braket{1}{\eta_i}_B\ket{\xi_j}_C\neq 0$ for $(i,j)\in\bbZ_2\times \bbZ_2$.  Applying Lemma~\ref{lem:trivial} to $\{\ket{\eta_i}_B\ket{\xi_j}_C\}_{(i,j)\in\bbZ_2\times \bbZ_2}$, we have
   \begin{equation}\label{eq:B3_333}
   E_{\cB_3^{(A)}}=k_1\bbI_{\cB_3^{(A)}}.
   \end{equation} Since $\cA_1^{(A)}\cap \cB_3^{(A)}\neq \emptyset$, it implies $k=k_1$. Thus, by Eqs.~\eqref{eq:eA_1333}  and \eqref{eq:B3_333}, we obtain
   \begin{equation}
   E_{\cA_1^{(A)}\cup\cB_3^{(A)}}=k\bbI_{\cA_1^{(A)}\cup\cB_3^{(A)}}.
   \end{equation}
    The intuitive figure of $E$ can be shown in Fig.~\ref{fig:tite333} (III).
    
   \noindent{\bf Step 4} 
   By the symmetry of Fig.~\ref{fig:tite416}, we can obtain  $E=k\bbI$.  The intuitive figure of $E$ can be shown in Fig.~\ref{fig:tite333} (IV).
   
    Thus,  $E$  is trivial. This completes the proof.
    \end{proof}
    
   \vspace{0.4cm}	
	The following is a UPB in $4\otimes 4\otimes 4$ (See Ref. \cite{Agrawal2019Genuinely}),
	\begin{equation}\label{eq:444UPB}
		\begin{aligned}
			\cA_0:=&\{\ket{\phi_r}_A\ket{\phi_s}_B\ket{\phi_t}_C\mid (r,s,t)\in\bbZ_2^{(\times 3)}\\ 
			     &\setminus\{(0,0,0)\}\}, \\
			\cA_1:=&\{\ket{\xi_j}_A\ket{0}_B\ket{\eta_i}_C\mid (i,j)\in\bbZ_3\times \bbZ_3\setminus\{(0,0)\}\},\\
			\cA_2:=&\{\ket{\xi_j}_A\ket{\eta_i}_B\ket{3}_C\mid (i,j)\in\bbZ_3\times \bbZ_3\setminus\{(0,0)\}\},\\
			\cA_3:=&\{\ket{3}_A\ket{\xi_j}_B\ket{\eta_i}_C\mid (i,j)\in\bbZ_3\times \bbZ_3\setminus\{(0,0)\}\},\\
			\cB_1:=&\{\ket{\eta_i}_A\ket{3}_B\ket{\xi_j}_C\mid (i,j)\in\bbZ_3\times \bbZ_3\setminus\{(0,0)\}\},\\
			\cB_2:=&\{\ket{\eta_i}_A\ket{\xi_j}_B\ket{0}_C\mid (i,j)\in\bbZ_3\times \bbZ_3\setminus\{(0,0)\}\},\\
			\cB_3:=&\{\ket{0}_A\ket{\eta_i}_B\ket{\xi_j}_C\mid (i,j)\in\bbZ_3\times \bbZ_3\setminus\{(0,0)\}\}, \\
			\ket{S}:=&\left(\sum_{i=0}^3\ket{i}\right)_A\left(\sum_{j=0}^3\ket{j}\right)_B\left(\sum_{k=0}^3\ket{k}\right)_C,
		\end{aligned}
	\end{equation}
	where $\bbZ_2^{(\times 3)}:=\bbZ_2\times\bbZ_2\times \bbZ_2,$
	$\ket{\eta_i}=\sum_{k=0}^2w_3^{ik}\ket{k}$ and $\ket{\xi_j}=\sum_{k=0}^2w_3^{jk}\ket{k+1}$ for $i,j\in\bbZ_3$  and $\ket{\phi_i}=\ket{1}+(-1)^i\ket{2}$ for $i\in\bbZ_2$. Now, we show that the above UPB is of  the strongest nonlocality.

	\begin{proposition}\label{pro:444stronglyupb}
		In $4\otimes 4\otimes 4$, the set $\{\cup_{i=1}^{3}\{\cA_i,\cB_i\}\cup \{\cA_0\} \cup\{\ket{S}\}\}$ given by Eq.~\eqref{eq:444UPB} is a UPB of  the strongest nonlocality. The size of this set is $56$.	
	\end{proposition}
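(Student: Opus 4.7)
The plan is to mirror the four-step argument used for Proposition~\ref{pro:upb333}, suitably enlarged to accommodate the extra subset $\cA_0$ and the $4\times 4$ grid on the $BC$ side. That the set is a UPB is established in Ref.~\cite{Agrawal2019Genuinely}, so the task is purely the strongest-nonlocality claim. By the cyclic invariance of the UPB under relabeling of the parties and Lemma~\ref{lem:cyc}, it suffices to consider a joint orthogonality-preserving POVM $\{E=M^\dagger M\}$ performed by $B$ and $C$ and show $E\propto \bbI_{BC}$. The seven projection supports $\cA_i^{(A)}$ ($i=0,1,2,3$) and $\cB_i^{(A)}$ ($i=1,2,3$) cover the $4\times 4$ $BC$-grid, with the new block $\cA_0^{(A)}=\{\ket{s}_B\ket{t}_C:s,t\in\{1,2\}\}$ occupying the interior plaquette and lying inside both $\cA_3^{(A)}$ and $\cB_3^{(A)}$.

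I would run Step~1 by invoking the Block Zeros Lemma on every pair of subsets whose $A$-components are non-orthogonal: pairs drawn from $\{\cA_1,\cA_2\}$ or $\{\cB_1,\cB_2\}$ share $\ket{\xi_j}_A$ or $\ket{\eta_i}_A$ components, mixed pairs $(\cA_i,\cB_j)$ with $i\neq j$ are similarly handled, and any pair involving $\cA_0$ is also in scope since $\ket{\phi_r}$ is non-orthogonal to each of $\ket{\xi_j}$, $\ket{\eta_i}$, $\ket{0}$ and $\ket{3}$. Further zero conditions obtained by pairing $\ket{3}_A$-states from $\cA_3$ with $\cA_1,\cA_2$ (and symmetrically $\ket{0}_A$-states from $\cB_3$ with $\cB_1,\cB_2$) clean up the remaining couplings between the corner regions and the interior, leaving $E$ block-diagonal according to the projection supports. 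Step~2 then trivializes the small corner block $\cA_1^{(A)}$: using the Fourier-like set $\{\ket{\eta_i}_C\}_{i\in\bbZ_3}$ from $\cA_1$, together with $\ket{S}$ and $\cB_3$, the orthogonality conditions yield $E_{\cA_1^{(A)}}\propto\bbI_{\cA_1^{(A)}}$ by the same mechanism as in the $3\otimes 3\otimes 3$ proof (the extra Fourier index only provides more equations).

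For Step~3 I would apply the Block Trivial Lemma twice. First to $\cB_3^{(A)}$ with the anchor $\ket{0}_B\ket{1}_C\in\cA_1^{(A)}\cap\cB_3^{(A)}$ and the Fourier set $\{\ket{\eta_i}_B\ket{\xi_j}_C\}_{(i,j)\in\bbZ_3\times\bbZ_3}$, which has nonzero overlap with the anchor; the orthogonality coming from $\ket{S}$ against $\cB_3$ supplies the remaining zero conditions, so Lemma~\ref{lem:trivial} gives $E_{\cB_3^{(A)}}\propto\bbI_{\cB_3^{(A)}}$, and overlap with $\cA_1^{(A)}$ fixes the constant $k$. Second, an analogous application on $\cA_0^{(A)}$ with the Fourier-like set $\{\ket{\phi_s}_B\ket{\phi_t}_C\}_{s,t\in\bbZ_2}$ and an anchor in $\cB_3^{(A)}\cap\cA_0^{(A)}$ forces $E_{\cA_0^{(A)}}=k\bbI_{\cA_0^{(A)}}$. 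Step~4 then closes the argument by the cyclic symmetry of the UPB, which transports the identity-proportionality to the remaining blocks $\cA_2^{(A)},\cA_3^{(A)},\cB_1^{(A)},\cB_2^{(A)}$, yielding $E=k\bbI_{BC}$. The main obstacle I expect is the bookkeeping around the interior plaquette $\cA_0^{(A)}\subset\cA_3^{(A)}\cap\cB_3^{(A)}$: unlike the $3\otimes 3\otimes 3$ case where the interior is the single basis vector $\ket{1}_B\ket{1}_C$ and is completely inert in Step~1, here three different proportionality constants inherited from three enveloping blocks must be shown to agree, and the Fourier-like anchor hypotheses of Lemma~\ref{lem:trivial} must be verified for each application. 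Once this is cleanly organized, the rest is a direct transcription of the $3\otimes 3\otimes 3$ argument.
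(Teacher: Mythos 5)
Your proposal follows the paper's own proof essentially step for step: Block Zeros applied pairwise to the five supports $\cA_1^{(A)},\cA_2^{(A)},\cA_0^{(A)},\cB_2^{(A)},\cB_1^{(A)}$ (which already tile the $4\times 4$ grid, so your extra pairings of $\cA_3$ with $\cA_1,\cA_2$ are redundant here, unlike in the $3\otimes3\otimes3$ case), then trivialization of $E_{\cA_1^{(A)}}$ using $\cA_1$, $\cB_3$ and $\ket{S}$, then the Block Trivial Lemma on $\cB_3^{(A)}$ anchored in $\cA_1^{(A)}\cap\cB_3^{(A)}$, then cyclic symmetry. Three caveats. First, your justification that ``$\ket{\phi_r}$ is non-orthogonal to each of $\ket{\xi_j}$, $\ket{\eta_i}$, $\ket{0}$ and $\ket{3}$'' is false: $\ket{\phi_r}$ is supported on $\{\ket{1},\ket{2}\}$, so $\braket{0}{\phi_r}=\braket{3}{\phi_r}=0$, and also $\braket{\phi_1}{\xi_0}=\braket{\phi_1}{\eta_0}=0$. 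This is harmless only because the pairs $(\cA_0,\cA_3)$ and $(\cA_0,\cB_3)$ are neither needed nor admissible for Lemma~\ref{lem:zero} (their $BC$-supports are not disjoint), and because the specific representatives $\ket{\xi_1}_A,\ket{\phi_1}_A,\ket{\eta_1}_A$ used for the five blocks are indeed mutually non-orthogonal. Second, Step~2 is where the real work sits, and it is not merely ``more equations'': since $\cA_0^{(A)}\subset\cB_3^{(A)}$, the orthogonality conditions supplied by $\cB_3$ couple the unknowns of $E_{\cA_1^{(A)}}$ to the unknowns $e_{s,t}$ of the central block $E_{\cA_0^{(A)}}$, which must first be parametrized via $\cA_0$'s own internal orthogonality; the paper then needs three specific index choices plus the stopper state to eliminate $e_{0,0}$ and reach $a_0=a_1=a_2$. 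Your sketch names the right states but does not carry out this coupled elimination. Third, your second application of the Block Trivial Lemma to $\cA_0^{(A)}$ is unnecessary: once $E_{\cB_3^{(A)}}=k\bbI_{\cB_3^{(A)}}$ is in hand, the containment $\cA_0^{(A)}\subset\cB_3^{(A)}$ already forces $E_{\cA_0^{(A)}}=k\bbI_{\cA_0^{(A)}}$. None of this changes the verdict that your route is the paper's route and, once the Step~2 computation is actually done, it closes.
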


	The proof of Proposition~\ref{pro:444stronglyupb} is given in Appendix~\ref{Appendix:upb444}.  Based on Propositions~\ref{pro:upb333} and \ref{pro:444stronglyupb}, we could  show the existence of UPBs with  the strongest nonlocality in $d\otimes d\otimes d$ for any integer $d\geq 3$.


		Fix an integer $d\geq 3$. For any integer $0\leq k\leq \fl{d-3}{2}$, let $\cC^{(d,d-2k)}=\cup_{i,j=1}^3\{\cA_i^{(d,d-2k)},\cB_i^{(d,d-2k)}\}$, where each $\cA_i^{(d,d-2k)}$ and $\cB_j^{(d,d-2k)}$ are  defined as follows:
		\begin{widetext}
	\begin{equation}\label{eq:upbddd}
		\begin{aligned}
			\cA_1^{(d,d-2k)}&:=\{\ket{\xi_j^{(d-2k)}}_A\ket{k}_B\ket{\eta_i^{(d-2k)}}_C\mid (i,j)\in\bbZ_{d-1-2k}\times \bbZ_{d-1-2k}\setminus\{(0,0)\}\},\\
			\cA_2^{(d,d-2k)}&:=\{\ket{\xi_j^{(d-2k)}}_A\ket{\eta_i^{(d-2k)}}_B\ket{d-1-k}_C\mid (i,j)\in\bbZ_{d-1-2k}\times \bbZ_{d-1-2k}\setminus\{(0,0)\}\},\\
			\cA_3^{(d,d-2k)}&:=\{\ket{d-1-k}_A\ket{\xi_j^{(d-2k)}}_B\ket{\eta_i^{(d-2k)}}_C\mid (i,j)\in\bbZ_{d-1-2k}\times \bbZ_{d-1-2k}\setminus\{(0,0)\}\},\\
			\cB_1^{(d,d-2k)}&:=\{\ket{\eta_i^{(d-2k)}}_A\ket{d-1-k}_B\ket{\xi_j^{(d-2k)}}_C\mid (i,j)\in\bbZ_{d-1-2k}\times \bbZ_{d,d-1-2k}\setminus\{(0,0)\}\},\\
			\cB_2^{(d,d-2k)}&:=\{\ket{\eta_i^{(d-2k)}}_A\ket{\xi_j^{(d-2k)}}_B\ket{k}_C\mid (i,j)\in\bbZ_{d-1-2k}\times \bbZ_{d-1-2k}\setminus\{(0,0)\}\},\\
			\cB_3^{(d,d-2k)}&:=\{\ket{k}_A\ket{\eta_i^{(d-2k)}}_B\ket{\xi_j^{(d-2k)}}_C\mid (i,j)\in\bbZ_{d-1-2k}\times \bbZ_{d-1-2k}\setminus\{(0,0)\}\},
		\end{aligned}
	\end{equation}
	\end{widetext}
	where $\ket{\eta_i^{(d-2k)}}=\sum_{t=k}^{d-2-k}w_{d-1-2k}^{i(t-k)}\ket{t}$, and $\ket{\xi_j^{(d-2k)}}=\sum_{t=k}^{d-2-k}w_{d-1-2k}^{j(t-k)}\ket{t+1}$, for $i,j\in\bbZ_{d-1-2k}$. If $d$ is even, we define  $\cA^{(d,0)}:=\{\ket{\phi_r}_A\ket{\phi_s}_B\ket{\phi_t}_C\mid (r,s,t)\in\bbZ_2\times \bbZ_2\times\bbZ_2\setminus\{(0,0,0)\}\}$
	where  $\ket{\phi_i}=\ket{\frac{d-2}{2}}+(-1)^i\ket{\frac{d}{2}}$ for $i\in\bbZ_2$. Moreover, we define the ``stopper state'' as
	$$
	\ket{S_d}=\left(\sum_{i=0}^{d-1}\ket{i}\right)_A\left(\sum_{j=0}^{d-1}\ket{j}\right)_B\left(\sum_{k=0}^{d-1}\ket{k}\right)_C.$$
	Now, we give a general result.
	
		\begin{theorem}\label{thm:dddstronglyupb}
		In $d\otimes d\otimes d$, $d\geq 3$,
		\begin{enumerate}[(i)]
			\item when $d$ is odd, the set $\{\cup_{k=0}^{\frac{d-3}{2}}\cC^{(d,d-2k)}\cup \{\ket{S_d}\}\}$ given by Eq.~\eqref{eq:upbddd} is a UPB of the strongest nonlocality.  The size of this set is $d^3-4d+4$;	
			\item when $d$ is even, the set $\{\cup_{k=0}^{\frac{d-4}{2}}\cC^{(d,d-2k)}\cup \{\cA^{(d,0)}\}\cup\{\ket{S_d}\}\}$ given by Eq.~\eqref{eq:upbddd} is a UPB of  the strongest nonlocality. The size of this set is $d^3-4d+8$.	
		\end{enumerate}	
	\end{theorem}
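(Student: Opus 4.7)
The plan splits into two parts: (i) verifying the given set is a UPB, and (ii) verifying it has the strongest nonlocality. For (i), both the construction and its unextendibility are from Agrawal \emph{et al.}~\cite{Agrawal2019Genuinely}; I would only need to re-verify orthogonality between states of different shells, using the Fourier structure of $\ket{\eta_i^{(d-2k)}}$ and $\ket{\xi_j^{(d-2k)}}$, along with the disjointness of the $A$, $B$ and $C$ coordinate pivots across distinct shells, and to note that the stopper $\ket{S_d}$ is orthogonal to each shell state because the relevant Fourier sum $\sum_{t=k}^{d-2-k} w_{d-1-2k}^{i(t-k)}$ vanishes when $i\neq 0$. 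For (ii), the cyclic invariance of the construction under $A\to B\to C\to A$, together with Lemma~\ref{lem:cyc} of Appendix~\ref{Appendix:2lemmas}, reduces matters to showing that any orthogonality-preserving POVM $\{E=M^{\dagger}M\}$ on the joint $BC$ subsystem is a scalar multiple of the identity.

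My approach to (ii) is a shell-by-shell induction on $k=0,1,\ldots$, peeling the nested subsets $\cC^{(d,d-2k)}$ from the outside inward. At each level $k$, the six subsets $\cA_i^{(d,d-2k)},\cB_j^{(d,d-2k)}$ have exactly the combinatorial structure of the six subsets in Proposition~\ref{pro:upb333}, but restricted to the sub-cube $\{k,k+1,\ldots,d-1-k\}^{3}$; in particular, their $BC$ projections induce a block-diagonal decomposition of the portion of $E$ supported on the shell, analogous to Eq.~\eqref{eq:matrix333}. The inductive hypothesis is that after processing shells $0,1,\ldots,k-1$, every block contributed by those shells is proportional to the identity with a single common constant $\kappa$.

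For the inductive step at level $k$ I would run the four-step template of the proof of Proposition~\ref{pro:upb333} on shell $k$. First, apply Lemma~\ref{lem:zero} to non-orthogonal pairs of states drawn across the six subsets of $\cC^{(d,d-2k)}$, together with the auxiliary single-row constraints obtained by combining states of $\cA_3^{(d,d-2k)}$ with states of $\cA_1^{(d,d-2k)}$ and $\cA_2^{(d,d-2k)}$ (exactly as in Step~1 of the $d=3$ proof), to force all off-diagonal blocks of $E$ across the shell's block decomposition, and between the shell and anything outside it that has already been analyzed, to vanish. Second, use pairs of states within a single subset such as $\cA_1^{(d,d-2k)}$ together with the stopper $\ket{S_d}$ to show a single diagonal block is proportional to the identity; here one uses the fact that the restriction of $\ket{S_d}$ to the current shell, after absorbing the outer contribution $\kappa\bbI$, reduces to a vector of the form $\ket{\eta_0^{(d-2k)}}_{B}\ket{\xi_0^{(d-2k)}}_{C}$, exactly as $\sum_{s,t=0}^{1}\ket{s}_B\ket{t+1}_C=\ket{\eta_0}_B\ket{\xi_0}_C$ did in Step~3 for $d=3$. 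Third, apply Lemma~\ref{lem:trivial} with the Fourier basis $\{\ket{\eta_i^{(d-2k)}}_{B}\ket{\xi_j^{(d-2k)}}_{C}\}$ and distinguished element $\ket{k}_B\ket{k+1}_C$ to conclude $E_{(\cB_3^{(d,d-2k)})^{(A)}}\propto \bbI$. Fourth, transport the conclusion across all six shell blocks via their overlapping supports and fix the proportionality constant by the single-vector-overlap trick used at the end of the proof of Proposition~\ref{pro:upb333}, which forces it to equal $\kappa$. The terminal layer is then parity-dependent: for odd $d$, the innermost shell $k=(d-3)/2$ is itself a $3\otimes 3\otimes 3$ copy and Proposition~\ref{pro:upb333} closes directly; for even $d$, after peeling down to $k=(d-4)/2$, the residual uncontrolled region is the $2\otimes 2\otimes 2$ central block covered by $\cA^{(d,0)}$, and the same analysis that handles the $\cA_0$ block in the proof of Proposition~\ref{pro:444stronglyupb} (see Appendix~\ref{Appendix:upb444}) finishes the argument.

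The principal obstacle is the stopper reduction in the second sub-step above: once the outer shells have been pinned to $\kappa\bbI$, I must certify that for any inner-shell state $\ket{\psi}$ the orthogonality condition ${}_{BC}\bra{\psi} E \ket{S_d}_{BC}=0$ rewrites, after subtracting the known outer contribution, as a constraint living entirely on the current shell's support and of a form matching the input of Lemma~\ref{lem:trivial}. This is a careful accounting of how the outer boundary coordinates contribute to the stopper's expansion and how those contributions have been already fixed by the induction hypothesis; the nesting of the shells and the Fourier structure of $\ket{\eta_i^{(d-2k)}},\ket{\xi_j^{(d-2k)}}$ conspire to make this reduction clean, so that the four-step template transfers essentially verbatim from the $d=3$ and $d=4$ cases. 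Matching the proportionality constant across consecutive shells is then automatic from the non-empty overlap of their block decompositions, which propagates $\kappa$ all the way from the outermost shell to the innermost core.
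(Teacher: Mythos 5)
Your toolkit is the right one (cyclic invariance plus Lemma~\ref{lem:cyc}, the Block Zeros and Block Trivial Lemmas, and the stopper identity $\sum_{s}\sum_{t}\ket{s}_B\ket{t+1}_C=\ket{\eta_0}_B\ket{\xi_0}_C$), and deferring unextendibility to Agrawal \emph{et al.} matches the paper. But your induction runs in the wrong direction, and this is a genuine gap rather than a presentational difference. The four-step template for shell $k$ cannot be closed knowing only the shallower shells $0,\dots,k-1$: the states of $\cB_3^{(d,d-2k)}$ and $\cA_3^{(d,d-2k)}$ have $BC$-supports equal to full $(d-1-2k)\times(d-1-2k)$ sub-squares, which contain every deeper shell $k'>k$. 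Consequently the orthogonality relations you need in your second and third sub-steps --- those of the form ${}_B\bra{\eta_{k'}^{(d-2k)}}{}_C\bra{\xi_{\ell}^{(d-2k)}}E\ket{\eta_i^{(d-2k)}}_B\ket{\xi_j^{(d-2k)}}_C=0$, which are precisely what force $E_{(\cA_1^{(d,d-2k)})^{(A)}}\propto\bbI$ and feed Lemma~\ref{lem:trivial} --- involve matrix elements of $E$ on the deeper shells, about which your induction hypothesis says nothing. You have located the ``principal obstacle'' on the wrong side: the contribution of the already-processed outer shells to the stopper condition is harmless (block-diagonality alone removes it), whereas it is the inner, not-yet-determined contribution that blocks the argument. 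For $d=4$ the paper gets away with juggling the unknown inner coefficients $e_{s,t}$ by hand in Eq.~\eqref{eq:thezero} and the equations that follow it, but that bookkeeping does not scale to arbitrarily many nested shells.

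The paper's proof inverts the order by inducting on $d$ rather than on the shell index. After Step 1 establishes block-diagonality, the key observation is that $\bra{S_d}\bbI\otimes E\ket{\psi}=\bra{S_{d-2}'}\bbI\otimes E\ket{\psi}$ for every inner-shell state $\ket{\psi}$, so the inner states together with the truncated stopper reproduce exactly the orthogonality data of the $(d-2)$-instance; the induction hypothesis then forces $E_{\cC}=L\,\bbI$ on the entire inner $(d-2)\times(d-2)$ region in one stroke. Only afterwards is the outermost shell analyzed, and the explicit computation visibly requires the inner region to be already known --- the term $-(d-2)L$ in Eq.~\eqref{eq:mainzero} is exactly the deeper-region contribution you would be unable to evaluate. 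If you want to keep a shell-by-shell formulation you must process shells from the inside out, with the innermost $3\otimes3\otimes3$ core (or the $\cA^{(d,0)}$ block for even $d$) as the base case; at that point you have reproduced the paper's induction in unrolled form.
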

	
 The proof of Theorem~\ref{thm:dddstronglyupb} is given in Appendix~\ref{Appendix:upbddd}.	Therefore, we have shown that  strongly nonlocal UPBs in $d\otimes d\otimes d$  do exist for all $d\geq 3$, which answers an open question in Refs.~\cite{Halder2019Strong,yuan2020strong}.  In Ref.~\cite{yuan2020strong}, the authors gave a construction of a strongly nonlocal orthogonal product set of size $6(d-1)^2$ in $d\otimes d\otimes d$ for any $d\geq 3$. When $d=3$, the size of the strongly nonlocal orthogonal product set is $24$, which is bigger than the size of the UPB  given by Eq.~\eqref{eq:upb333}. Note that any their strongly nonlocal orthogonal product set in $d\otimes d\otimes d$ is not a UPB as it is completable.  Our strongly nonlocal UPBs reveal the relationship between strong quantum nonlocality and UPBs. 
 
For the normalized UPB $\{\ket{\psi_i}\}_{i=1}^{t_d}$ in $d\otimes d\otimes d$ of this paper, where $t_d=d^3-8(\fl{d-3}{2}+1)$ for $d\geq 3$, it can be used to construct the bound entangled state 
 \begin{equation}
     \rho=\frac{1}{t_d}\left(\bbI-\sum_{i=1}^{t_d}\ket{\psi_i}\bra{\psi_i}\right)
 \end{equation} 
 with the following new property introduced by a recent result \cite{Hari2021}. 
The bound entangled state $\rho$ has positive partial transposition across every bipartition, but it is not separable across every bipartition \cite{Hari2021}. It is well-known that a separable bipartite state must have positive partial transposition.  Such a bound entangled state shows that  the set of states which are separable across every bipartition is a strict subset of states having positive partial transposition across across every bipartition in $d\otimes d\otimes d$ \cite{Hari2021}. Further, let $\cH_d$ be the subspace spanned by the UPB $\{\ket{\psi_i}\}_{i=1}^{t_d}$, then the complementary space of $\cH_d$ (denote it by $\cH_d^{\bot}$) is an entangled subspace (each state of this subspace is entangled). However, it is not a genuinely entangled subspace (each state of this subspace is genuinely entangled) \cite{Agrawal2019Genuinely}. For example, when $d=3$, the state $\ket{\phi_1}=\ket{\xi_0}_A\ket{0}_B\ket{\eta_0}_C-\ket{\xi_0}_A\ket{\eta_0}_B\ket{2}_C\in \cH_3^{\bot}$, and it is separable across $A|BC$ bipartition. However, if we let $\ket{\phi_2}=\ket{0}_A\ket{\eta_0}_B\ket{\xi_0}_C-\ket{\eta_0}_A\ket{2}_B\ket{\xi_0}_C$ and $\ket{\phi_3}=\ket{2}_A\ket{\xi_0}_B\ket{\eta_0}_C-\ket{\eta_0}_A\ket{\xi_0}_B\ket{0}_C$, then the complementary space of the subspace spanned by  $\{\ket{\psi_i}\}_{i=1}^{t_3}\cup \{|\phi_i\rangle\}_{i=1}^3$ is a genuinely entangled subspace, which is distillable across every bipartition  \cite{Agrawal2019Genuinely}.

 It is possible to apply our techniques to construct strongly nonlocal UPBs in 
$d_1\otimes d_2\otimes d_3$. However, we need to overcome several difficulties. First, we need to construct UPBs in $d_1\otimes d_2\otimes d_3$ (it is possible to generalize the UPB in $d\otimes d\otimes d$ of the paper to the UPB in $d_1\otimes d_2\otimes d_3$). Second, for the strong quantum nonlocality of the UPB, we require that the UPB has a similar structure under cyclic permutation of the subsystems by Lemma~\ref{lem:cyc}. Otherwise, we need to show that any two subsystems can only perform a trivial orthogonality-preserving POVM, and it causes a lot of calculations. Third, for the grid representation of the UPB in $A|BC$ bipartition (see Fig.~\ref{fig:tite39} for an example), we need to use Block Zeros Lemma and Block Trivial Lemma efficiently.   

It is known that UPB is locally indistinguishable \cite{de2004distinguishability}.  One may ask whether there exists a UPB that is locally indistinguishable across every bipartition?    The intuition is to construct a UPB which is still a UPB across every bipartition. Unfortunately, this is a well-known open question \cite{demianowicz2018unextendible}.  Nevertheless, our strongly nonlocal UPB implies the UPB which is locally indistinguishable across every bipartition. Thus our strongly nonlocal UPB solves this problem in a different way.

Finally, we indicate the possible application of the strongly nonlocal UPB in secret sharing. 
Suppose that the information shared by systems Alice, Bob and Charlie is encrypted into some orthogonal quantum states, and the information needs to be revealed together at a later stage. For the common interests, it is assumed that any operation of these participants will not lead to the final failure of the correct disclosure of the information. However, if some subsystems can cooperate, how to ensure the security of information? One finds that for their common interests, the participants can only perform orthogonality preserving measurement. Otherwise, even the global measurement can not reveal the final results.  Under this setting,  the strongest  nonlocality  of the encoded states implies the security of the  encrypted information. In fact,   the maximum success probability  for perfect discrimination of states with the strongest nonlocality is zero without global orthogonality preserving measurements. To perfect discrimination of the strongly nonlocal UPB, there are two methods. First, the three players are collusive, i.e. they can perform a  global orthogonality-preserving POVM, then the strongly nonlocal UPB can be perfectly distinguished \cite{computation2010}.   Second, one can use additional entanglement resources \cite{cohen2008understanding}. For example, let the maximally entangled state  $\ket{\psi}=\sum_{i=0}^{d-1}\ket{i}_A\ket{i}_B$ be shared between Alice and Bob, then Alice  teleports her subsystem to Bob by using the teleportation-based protocol \cite{Bennett1993Teleporting,Sumit2019Genuinely}. Next, let the maximally entangled state  $\ket{\psi}=\sum_{i=0}^{d-1}\ket{i}_B\ket{i}_C$ be shared between Bob and Charlie, then  Charlie  teleports  his subsystem to Bob by using the teleportation-based protocol. Thus, Bob can easily distinguish the strongly nonlocal UPB. It costs $2\log_2d$  ebits of entanglement resource in the above discrimination protocol.  Protocols consuming less entanglement than the teleportation-based protocol attract much attention in recent years \cite{cohen2008understanding,zhang2020locally,Sumit2019Genuinely,2020Strong,Shi2020Unextendible}. 

\section{Conclusion}
\label{sec:con}
We have shown the existence of  strongly nonlocal UPBs in $d\otimes d\otimes d$ for any $d\geq 3$,  which answer an open question in Refs.~\cite{Halder2019Strong} and \cite{yuan2020strong}.   Our results exhibit the relations between quantum nonlocality and UPBs. Recently, Ref.~\cite{shi2021upb} has proved the existence of strongly nonlocal UPBs in general three-,
and four-partite systems. There are some interesting problems left.  Can we construct strongly nonlocal UPBs in $d^{\otimes n}$ for $d\geq 3$ and $n\geq 5$?  Whether there exists a UPB in  $d^{\otimes n}$  which is still a UPB for every bipartition for $d\geq 3$ and  $n\geq 3$?

\section*{Acknowledgments}
\label{sec:ack}	
	The authors are very grateful to the editor and the anonymous  reviewers for providing
many useful suggestions which have greatly improved the presentation of our paper. FS and XZ were supported by the NSFC under Grants No. 11771419 and No. 12171452, the Anhui Initiative in Quantum Information Technologies under Grant No. AHY150200, and the National Key Research and Development Program of China 2020YFA0713100.  MSL and MHY were supported  by  National  Natural  Science  Foundation  of  China  (12005092, 11875160,   and  U1801661), the China Postdoctoral Science Foundation (2020M681996), the  Natural  Science  Foundation  of  Guang-dong  Province  (2017B030308003),   the  Key  R$\&$D  Program of   Guangdong   province   (2018B030326001),   the   Guang-dong    Innovative    and    Entrepreneurial    Research    TeamProgram (2016ZT06D348), the Science, Technology and   Innovation   Commission   of   Shenzhen   Municipality (JCYJ20170412152620376   and   JCYJ20170817105046702 and  KYTDPT20181011104202253),   the Economy, Trade  and  Information  Commission  of  Shenzhen Municipality (201901161512).   YLW is supported by   the NSFC under Grant No. 11901084, 61773119 and the Research startup funds of DGUT under Grant No. GC300501-103.  LC and MH were supported by the  NNSF of China (Grant No. 11871089), and the Fundamental Research Funds for the Central Universities (Grant No. ZG216S2005).

\bibliographystyle{unsrtnat}
\bibliography{reference.bib}

\begin{thebibliography}{54}
\providecommand{\natexlab}[1]{#1}
\providecommand{\url}[1]{\texttt{#1}}
\expandafter\ifx\csname urlstyle\endcsname\relax
  \providecommand{\doi}[1]{doi: #1}\else
  \providecommand{\doi}{doi: \begingroup \urlstyle{rm}\Url}\fi

\bibitem[Terhal et~al.(2001)Terhal, DiVincenzo, and Leung]{terhal2001hiding}
B.~M. Terhal, D.~P. DiVincenzo, and D.~W. Leung.
\newblock Hiding bits in bell states.
\newblock \emph{Phys. Rev. Lett.}, 86:\penalty0 5807--5810, Jun 2001.
\newblock \doi{https://doi.org/10.1103/PhysRevLett.86.5807}.

\bibitem[DiVincenzo et~al.(2002)DiVincenzo, Leung, and
  Terhal]{divincenzo2002quantum}
D.~P. DiVincenzo, D.~W. Leung, and B.~M. Terhal.
\newblock Quantum data hiding.
\newblock \emph{IEEE Trans. Inf. Theory}, 48\penalty0 (3):\penalty0 580--598,
  2002.
\newblock \doi{https://doi.org/10.1109/18.985948}.

\bibitem[Eggeling and Werner(2002)]{eggeling2002hiding}
T.~Eggeling and R.~F. Werner.
\newblock Hiding classical data in multipartite quantum states.
\newblock \emph{Phys. Rev. Lett.}, 89:\penalty0 097905, Aug 2002.
\newblock \doi{https://doi.org/10.1103/PhysRevLett.89.097905}.

\bibitem[Matthews et~al.(2009)Matthews, Wehner, and
  Winter]{Matthews2009Distinguishability}
W.~Matthews, S.~Wehner, and A.~Winter.
\newblock Distinguishability of quantum states under restricted families of
  measurements with an application to quantum data hiding.
\newblock \emph{Commun. Math. Phys.}, 291\penalty0 (3):\penalty0 p.813--843,
  2009.
\newblock \doi{https://doi.org/10.1007/s00220-009-0890-5}.

\bibitem[Markham and Sanders(2008)]{Markham2008Graph}
D.~Markham and B.~C. Sanders.
\newblock Graph states for quantum secret sharing.
\newblock \emph{Phys. Rev. A}, 78:\penalty0 042309, Oct 2008.
\newblock \doi{https://doi.org/10.1103/PhysRevA.78.042309}.

\bibitem[Bennett et~al.(1999{\natexlab{a}})Bennett, DiVincenzo, Fuchs, Mor,
  Rains, Shor, Smolin, and Wootters]{bennett1999quantum}
C.~H. Bennett, D.~P. DiVincenzo, C.~A. Fuchs, T.~Mor, E.~Rains, P.~W. Shor,
  J.~A. Smolin, and W.~K. Wootters.
\newblock Quantum nonlocality without entanglement.
\newblock \emph{Phys. Rev. A}, 59:\penalty0 1070--1091, Feb 1999{\natexlab{a}}.
\newblock \doi{https://doi.org/10.1103/PhysRevA.59.1070}.

\bibitem[Divincenzo et~al.(2003)Divincenzo, Mor, Shor, Smolin, and Terhal]{1}
D.~P. Divincenzo, T.~Mor, P.~W. Shor, J.~A. Smolin, and B.~M. Terhal.
\newblock Unextendible product bases, uncompletable product bases and bound
  entanglement.
\newblock \emph{Commun. Math. Phys.}, 238\penalty0 (3):\penalty0 379--410,
  2003.
\newblock \doi{https://doi.org/10.1007/s00220-003-0877-6}.

\bibitem[Feng and Shi(2009)]{2}
Y.~Feng and Y.~Shi.
\newblock Characterizing locally indistinguishable orthogonal product states.
\newblock \emph{IEEE Trans. Inf. Theory}, 55\penalty0 (6):\penalty0 2799--2806,
  2009.
\newblock \doi{https://doi.org/10.1109/TIT.2009.2018330}.

\bibitem[Niset and Cerf(2006)]{3}
J.~Niset and N.~J. Cerf.
\newblock Multipartite nonlocality without entanglement in many dimensions.
\newblock \emph{Phys. Rev. A}, 74:\penalty0 052103, Nov 2006.
\newblock \doi{https://doi.org/10.1103/PhysRevA.74.052103}.

\bibitem[Yang et~al.(2013)Yang, Gao, Tian, Cao, and Wen]{4}
Y.-H. Yang, F.~Gao, G.-J. Tian, T.-Q. Cao, and Q.-Y. Wen.
\newblock Local distinguishability of orthogonal quantum states in a
  $2\ensuremath{\bigotimes}2\ensuremath{\bigotimes}2$ system.
\newblock \emph{Phys. Rev. A}, 88:\penalty0 024301, Aug 2013.
\newblock \doi{https://doi.org/10.1103/PhysRevA.88.024301}.

\bibitem[Halder(2018)]{5}
S.~Halder.
\newblock Several nonlocal sets of multipartite pure orthogonal product states.
\newblock \emph{Phys. Rev. A}, 98:\penalty0 022303, Aug 2018.
\newblock \doi{https://doi.org/10.1103/PhysRevA.98.022303}.

\bibitem[Xu et~al.(2017)Xu, Wen, Gao, Qin, and Zuo]{6}
G.~Xu, Q.~Wen, F.~Gao, S.~Qin, and H.~Zuo.
\newblock Local indistinguishability of multipartite orthogonal product bases.
\newblock \emph{Quantum Inf. Process.}, 16\penalty0 (11):\penalty0 276, 2017.
\newblock \doi{https://doi.org/10.1007/s11128-017-1725-5}.

\bibitem[Wang et~al.(2017)Wang, Li, Zheng, and Fei]{7}
Y.-L. Wang, M.-S. Li, Z.-J. Zheng, and S.-M. Fei.
\newblock The local indistinguishability of multipartite product states.
\newblock \emph{Quantum Inf. Process.}, 16\penalty0 (1):\penalty0 5, 2017.
\newblock \doi{https://doi.org/10.1007/s11128-016-1477-7}.

\bibitem[Zhang et~al.(2017)Zhang, Zhang, Gao, Wen, and Oh]{8}
Z.-C. Zhang, K.-J. Zhang, F.~Gao, Q.-Y. Wen, and C.~H. Oh.
\newblock Construction of nonlocal multipartite quantum states.
\newblock \emph{Phys. Rev. A}, 95:\penalty0 052344, May 2017.
\newblock \doi{https://doi.org/10.1103/PhysRevA.95.052344}.

\bibitem[Ghosh et~al.(2004)Ghosh, Kar, Roy, and Sarkar]{9}
S.~Ghosh, G.~Kar, A.~Roy, and D.~Sarkar.
\newblock Distinguishability of maximally entangled states.
\newblock \emph{Phys. Rev. A}, 70:\penalty0 022304, Aug 2004.
\newblock \doi{https://doi.org/10.1103/PhysRevA.70.022304}.

\bibitem[Fan(2004)]{10}
H.~Fan.
\newblock Distinguishability and indistinguishability by local operations and
  classical communication.
\newblock \emph{Phys. Rev. Lett.}, 92:\penalty0 177905, Apr 2004.
\newblock \doi{https://doi.org/10.1103/PhysRevLett.92.177905}.

\bibitem[Nathanson(2005)]{11}
M.~Nathanson.
\newblock Distinguishing bipartitite orthogonal states using {LOCC}: Best and
  worst cases.
\newblock \emph{J. Math. Phys}, 46\penalty0 (6):\penalty0 062103, 2005.
\newblock \doi{https://doi.org/10.1063/1.1914731}.

\bibitem[Yu et~al.(2011)Yu, Duan, and Ying]{12}
N.~Yu, R.~Duan, and M.~Ying.
\newblock Any $2\ensuremath{\bigotimes}n$ subspace is locally distinguishable.
\newblock \emph{Phys. Rev. A}, 84:\penalty0 012304, Jul 2011.
\newblock \doi{https://doi.org/10.1103/PhysRevA.84.012304}.

\bibitem[Duan et~al.(2007)Duan, Feng, Ji, and Ying]{13}
R.~Duan, Y.~Feng, Z.~Ji, and M.~Ying.
\newblock Distinguishing arbitrary multipartite basis unambiguously using local
  operations and classical communication.
\newblock \emph{Phys. Rev. Lett.}, 98:\penalty0 230502, Jun 2007.
\newblock \doi{https://doi.org/10.1103/PhysRevLett.98.230502}.

\bibitem[Bandyopadhyay et~al.(2011)Bandyopadhyay, Ghosh, and Kar]{14}
S.~Bandyopadhyay, S.~Ghosh, and G.~Kar.
\newblock {LOCC} distinguishability of unilaterally transformable quantum
  states.
\newblock \emph{New J. Phys.}, 13\penalty0 (12):\penalty0 123013, 2011.
\newblock \doi{https://doi.org/10.1088/1367-2630/13/12/123013}.

\bibitem[Cosentino(2013)]{15}
A.~Cosentino.
\newblock Positive-partial-transpose-indistinguishable states via semidefinite
  programming.
\newblock \emph{Phys. Rev. A}, 87:\penalty0 012321, Jan 2013.
\newblock \doi{https://doi.org/10.1103/PhysRevA.87.012321}.

\bibitem[Yu et~al.(2012)Yu, Duan, and Ying]{16}
N.~Yu, R.~Duan, and M.~Ying.
\newblock Four locally indistinguishable ququad-ququad orthogonal maximally
  entangled states.
\newblock \emph{Phys. Rev. Lett.}, 109:\penalty0 020506, Jul 2012.
\newblock \doi{https://doi.org/10.1103/PhysRevLett.109.020506}.

\bibitem[Bandyopadhyay(2012)]{17}
S.~Bandyopadhyay.
\newblock Entanglement, mixedness, and perfect local discrimination of
  orthogonal quantum states.
\newblock \emph{Phys. Rev. A}, 85:\penalty0 042319, Apr 2012.
\newblock \doi{https://doi.org/10.1103/PhysRevA.85.042319}.

\bibitem[De~Rinaldis(2004)]{de2004distinguishability}
S.~De~Rinaldis.
\newblock Distinguishability of complete and unextendible product bases.
\newblock \emph{Phys. Rev. A}, 70:\penalty0 022309, Aug 2004.
\newblock \doi{https://doi.org/10.1103/PhysRevA.70.022309}.

\bibitem[Bennett et~al.(1999{\natexlab{b}})Bennett, DiVincenzo, Mor, Shor,
  Smolin, and Terhal]{bennett1999unextendible}
C.~H. Bennett, D.~P. DiVincenzo, T.~Mor, P.~W. Shor, J.~A. Smolin, and B.~M.
  Terhal.
\newblock Unextendible product bases and bound entanglement.
\newblock \emph{Phys. Rev. Lett.}, 82:\penalty0 5385--5388, Jun
  1999{\natexlab{b}}.
\newblock \doi{https://doi.org/10.1103/PhysRevLett.82.5385}.

\bibitem[Tura et~al.(2012)Tura, Augusiak, Hyllus, Ku\ifmmode~\acute{s}\else
  \'{s}\fi{}, Samsonowicz, and Lewenstein]{Tura2012Four}
J.~Tura, R.~Augusiak, P.~Hyllus, M.~Ku\ifmmode~\acute{s}\else \'{s}\fi{},
  J.~Samsonowicz, and M.~Lewenstein.
\newblock Four-qubit entangled symmetric states with positive partial
  transpositions.
\newblock \emph{Phys. Rev. A}, 85:\penalty0 060302, Jun 2012.
\newblock \doi{https://doi.org/10.1103/PhysRevA.85.060302}.

\bibitem[Chen et~al.(2014)Chen, Chen, and Zeng]{Chen2014Unextendible}
J.~Chen, L.~Chen, and B.~Zeng.
\newblock Unextendible product basis for fermionic systems.
\newblock \emph{J. Math. Phys.}, 55\penalty0 (8), 2014.
\newblock \doi{https://doi.org/10.1063/1.4893358}.

\bibitem[Augusiak et~al.(2012)Augusiak, Fritz, Kotowski, Kotowski, Pawlowski,
  Lewenstein, and Ac\'{\i}n]{Augusiak2012tight}
R.~Augusiak, T.~Fritz, M.~Kotowski, M.~Kotowski, M.~Pawlowski, M.~Lewenstein,
  and A.~Ac\'{\i}n.
\newblock Tight bell inequalities with no quantum violation from qubit
  unextendible product bases.
\newblock \emph{Phys. Rev. A}, 85:\penalty0 042113, Apr 2012.
\newblock \doi{https://doi.org/10.1103/PhysRevA.85.042113}.

\bibitem[Augusiak et~al.(2011)Augusiak, Stasi\ifmmode~\acute{n}\else
  \'{n}\fi{}ska, Hadley, Korbicz, Lewenstein, and Ac\'{\i}n]{augusiak2011bell}
R.~Augusiak, J.~Stasi\ifmmode~\acute{n}\else \'{n}\fi{}ska, C.~Hadley, J.~K.
  Korbicz, M.~Lewenstein, and A.~Ac\'{\i}n.
\newblock Bell inequalities with no quantum violation and unextendable product
  bases.
\newblock \emph{Phys. Rev. Lett.}, 107:\penalty0 070401, Aug 2011.
\newblock \doi{https://doi.org/10.1103/PhysRevLett.107.070401}.

\bibitem[Alon and Lov{\'a}sz(2001)]{alon2001unextendible}
N.~Alon and L.~Lov{\'a}sz.
\newblock Unextendible product bases.
\newblock \emph{J Comin. Theo. Ser. A}, 95\penalty0 (1):\penalty0 169--179,
  2001.
\newblock \doi{https://doi.org/10.1006/jcta.2000.3122}.

\bibitem[Feng(2006)]{Fen06}
K.~Feng.
\newblock Unextendible product bases and $1$-factorization of complete graphs.
\newblock \emph{Discrete Appl. Math.}, 154:\penalty0 942--949, 2006.
\newblock \doi{https://doi.org/10.1016/j.dam.2005.10.011}.

\bibitem[Johnston(2013)]{Joh13}
N.~Johnston.
\newblock The minimum size of qubit unextendible product bases.
\newblock \emph{arXiv:1302.1604}, 2013.
\newblock URL \url{https://arxiv.org/abs/1302.1604}.

\bibitem[Chen and Johnston(2013)]{Chen2013The}
J.~Chen and N.~Johnston.
\newblock The minimum size of unextendible product bases in the bipartite case
  (and some multipartite cases).
\newblock \emph{Commun. Math. Phys.}, 333\penalty0 (1):\penalty0 351--365,
  2013.
\newblock \doi{https://doi.org/10.1007/s00220-014-2186-7}.

\bibitem[Johnston(2014)]{Johnston2014The}
N.~Johnston.
\newblock The structure of qubit unextendible product bases.
\newblock \emph{J. Phys. A: Math. Theor.}, 47\penalty0 (42):\penalty0 424034,
  2014.
\newblock \doi{https://doi.org/10.1088/1751-8113/47/42/424034}.

\bibitem[Halder et~al.(2019{\natexlab{a}})Halder, Banik, and
  Ghosh]{halder2019family}
S.~Halder, M.~Banik, and S.~Ghosh.
\newblock Family of bound entangled states on the boundary of the peres set.
\newblock \emph{Phys. Rev. A}, 99:\penalty0 062329, Jun 2019{\natexlab{a}}.
\newblock \doi{https://doi.org/10.1103/PhysRevA.99.062329}.

\bibitem[Agrawal et~al.(2019)Agrawal, Halder, and Banik]{Agrawal2019Genuinely}
S.~Agrawal, S.~Halder, and M.~Banik.
\newblock Genuinely entangled subspace with all-encompassing distillable
  entanglement across every bipartition.
\newblock \emph{Phys. Rev. A}, 99:\penalty0 032335, Mar 2019.
\newblock \doi{https://doi.org/10.1103/PhysRevA.99.032335}.

\bibitem[Shi et~al.(2020{\natexlab{a}})Shi, Zhang, and
  Chen]{Shi2020Unextendible}
F.~Shi, X.~Zhang, and L.~Chen.
\newblock Unextendible product bases from tile structures and their local
  entanglement-assisted distinguishability.
\newblock \emph{Phys. Rev. A}, 101:\penalty0 062329, Jun 2020{\natexlab{a}}.
\newblock \doi{https://doi.org/10.1103/PhysRevA.101.062329}.

\bibitem[Bej and Halder(2020)]{Bej}
P.~Bej and S.~Halder.
\newblock Unextendible product bases, bound entangled states, and the range
  criterion.
\newblock \emph{Phys. Lett. A}, 386:\penalty0 126992, 2020.
\newblock \doi{https://doi.org/10.1016/j.physleta.2020.126992}.

\bibitem[Halder et~al.(2019{\natexlab{b}})Halder, Banik, Agrawal, and
  Bandyopadhyay]{Halder2019Strong}
S.~Halder, M.~Banik, S.~Agrawal, and S.~Bandyopadhyay.
\newblock Strong quantum nonlocality without entanglement.
\newblock \emph{Phys. Rev. Lett.}, 122:\penalty0 040403, Feb
  2019{\natexlab{b}}.
\newblock \doi{https://doi.org/10.1103/PhysRevLett.122.040403}.

\bibitem[Yuan et~al.(2020)Yuan, Tian, and Sun]{yuan2020strong}
P.~Yuan, G.~Tian, and X.~Sun.
\newblock Strong quantum nonlocality without entanglement in multipartite
  quantum systems.
\newblock \emph{Phys. Rev. A}, 102:\penalty0 042228, Oct 2020.
\newblock \doi{https://doi.org/10.1103/PhysRevA.102.042228}.

\bibitem[Che et~al.(2020)Che, Dou, Lei, and Yang]{Chen2020}
B.~Che, Z.~Dou, M.~Lei, and Y.~Yang.
\newblock The construction of sets with strong quantum nonlocality using fewer
  states.
\newblock \emph{arXiv:2011.00924v2}, 2020.
\newblock URL \url{https://arxiv.org/abs/2011.00924}.

\bibitem[Shi et~al.(2021{\natexlab{a}})Shi, Li, Hu, Chen, Yung, Wang, and
  Zhang]{shi2021hyper}
F.~Shi, M.-S. Li, M.~Hu, L.~Chen, M.-H. Yung, Y.-L. Wang, and X.~Zhang.
\newblock Strong quantum nonlocality from hypercubes.
\newblock \emph{arXiv:2110.08461}, 2021{\natexlab{a}}.
\newblock URL \url{https://arxiv.org/abs/2110.08461}.

\bibitem[Shi et~al.(2020{\natexlab{b}})Shi, Hu, Chen, and Zhang]{2020Strong}
F.~Shi, M.~Hu, L.~Chen, and X.~Zhang.
\newblock Strong quantum nonlocality with entanglement.
\newblock \emph{Phys. Rev. A}, 102:\penalty0 042202, Oct 2020{\natexlab{b}}.
\newblock \doi{https://doi.org/10.1103/PhysRevA.102.042202}.

\bibitem[Wang et~al.(2021)Wang, Li, and Yung]{li2}
Y.-L. Wang, M.-S. Li, and M.-H. Yung.
\newblock Graph-connectivity-based strong quantum nonlocality with genuine
  entanglement.
\newblock \emph{Phys. Rev. A}, 104:\penalty0 012424, Jul 2021.
\newblock \doi{https://doi.org/10.1103/PhysRevA.104.012424}.

\bibitem[Zhang and Zhang(2019)]{zhangstrong2019}
Z.-C. Zhang and X.~Zhang.
\newblock Strong quantum nonlocality in multipartite quantum systems.
\newblock \emph{Phys. Rev. A}, 99:\penalty0 062108, Jun 2019.
\newblock \doi{https://doi.org/10.1103/PhysRevA.99.062108}.

\bibitem[Li et~al.(2021)Li, Wang, Shi, and Yung]{li2020local}
M.-S. Li, Y.-L. Wang, F.~Shi, and M.-H. Yung.
\newblock Local distinguishability based genuinely quantum nonlocality without
  entanglement.
\newblock \emph{Journal of Physics A: Mathematical and Theoretical},
  54\penalty0 (44):\penalty0 445301, 2021.
\newblock \doi{https://doi.org/10.1088/1751-8121/ac28cd}.

\bibitem[Rout et~al.(2019)Rout, Maity, Mukherjee, Halder, and
  Banik]{Sumit2019Genuinely}
S.~Rout, A.~G. Maity, A.~Mukherjee, S.~Halder, and M.~Banik.
\newblock Genuinely nonlocal product bases: Classification and
  entanglement-assisted discrimination.
\newblock \emph{Phys. Rev. A}, 100:\penalty0 032321, Sep 2019.
\newblock \doi{https://doi.org/10.1103/PhysRevA.100.032321}.

\bibitem[S~V et~al.(2021)S~V, Ranjan, and Banik]{Hari2021}
H.~k. S~V, A.~Ranjan, and M.~Banik.
\newblock State space structure of tripartite quantum systems.
\newblock \emph{Phys. Rev. A}, 104:\penalty0 022437, Aug 2021.
\newblock \doi{https://doi.org/10.1103/PhysRevA.104.022437}.

\bibitem[Demianowicz and Augusiak(2018)]{demianowicz2018unextendible}
M.~Demianowicz and R.~Augusiak.
\newblock From unextendible product bases to genuinely entangled subspaces.
\newblock \emph{Phys. Rev. A}, 98:\penalty0 012313, Jul 2018.
\newblock \doi{https://doi.org/10.1103/PhysRevA.98.012313}.

\bibitem[Nielsen and Chuang(2010)]{computation2010}
M.~A. Nielsen and I.~L. Chuang.
\newblock \emph{Quantum Computation and Quantum Information: 10th Anniversary
  Edition}.
\newblock Cambridge University Press, Cambridge, UK, 2010.
\newblock \doi{https://doi.org/10.1017/CBO9780511976667}.

\bibitem[Cohen(2008)]{cohen2008understanding}
S.~M. Cohen.
\newblock Understanding entanglement as resource: Locally distinguishing
  unextendible product bases.
\newblock \emph{Phys. Rev. A}, 77:\penalty0 012304, Jan 2008.
\newblock \doi{https://doi.org/10.1103/PhysRevA.77.012304}.

\bibitem[Bennett et~al.(1993)Bennett, Brassard, Cr\'epeau, Jozsa, Peres, and
  Wootters]{Bennett1993Teleporting}
C.~H. Bennett, G.~Brassard, C.~Cr\'epeau, R.~Jozsa, A.~Peres, and W.~K.
  Wootters.
\newblock Teleporting an unknown quantum state via dual classical and
  einstein-podolsky-rosen channels.
\newblock volume~70, pages 1895--1899. American Physical Society, Mar 1993.
\newblock \doi{https://doi.org/10.1103/PhysRevLett.70.1895}.

\bibitem[Zhang et~al.(2020)Zhang, Wu, and Zhang]{zhang2020locally}
Z.-C. Zhang, X.~Wu, and X.~Zhang.
\newblock Locally distinguishing unextendible product bases by using
  entanglement efficiently.
\newblock \emph{Phys. Rev. A}, 101:\penalty0 022306, Feb 2020.
\newblock \doi{https://doi.org/10.1103/PhysRevA.101.022306}.

\bibitem[Shi et~al.(2021{\natexlab{b}})Shi, Li, Chen, and Zhang]{shi2021upb}
F.~Shi, M.-S. Li, L.~Chen, and X.~Zhang.
\newblock Strong quantum nonlocality for unextendible product bases in
  heterogeneous systems.
\newblock \emph{Journal of Physics A: Mathematical and Theoretical},
  55\penalty0 (1):\penalty0 015305, 2021{\natexlab{b}}.
\newblock \doi{https://doi.org/10.1088/1751-8121/ac3bea}.

\end{thebibliography}







\onecolumn
\appendix

\renewcommand{\theequation}{A\arabic{equation}}
\renewcommand{\theproposition}{A\arabic{proposition}}
\setcounter{equation}{0}
\setcounter{table}{0}
\setcounter{section}{0}
\setcounter{proposition}{0}
\setcounter{lemma}{0}
\setcounter{theorem}{0}
	
\section{The proofs of Lemmas~\ref{lem:zero} and \ref{lem:trivial} }\label{Appendix:zero_proof}
	\subsection{The proof of Lemma~\ref{lem:zero}}
	\begin{lemma}[Block Zeros Lemma]
		Let  an  $n\times n$ matrix $E=(a_{i,j})_{i,j\in\bbZ_n}$ be the matrix representation of an operator  $E=M^{\dagger}M$  under the basis  $\cB:=\{\ket{0},\ket{1},\ldots,\ket{n-1}\}$. Given two nonempty disjoint subsets $\cS$ and $\cT$ of $\cB$, assume  that  $\{\ket{\psi_i}\}_{i=0}^{s-1}$, $\{\ket{\phi_j}\}_{j=0}^{t-1}$ are two orthogonal sets  spanned by $\cS$ and $\cT$ respectively, where $s=|\cS|,$ and $t=|\cT|.$  If  $\langle \psi_i| E| \phi_j\rangle =0$
		for any $i\in \mathbb{Z}_s,j\in\mathbb{Z}_t$(we call these zero conditions), then   ${}_\mathcal{S}E_{\mathcal{T}}=\mathbf{0}$  and  ${}_\mathcal{T}E_{\mathcal{S}}=\mathbf{0}$.
	\end{lemma}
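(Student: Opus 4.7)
The plan is to exploit the factorization $E = M^\dagger M$, which turns $\langle x | E | y \rangle$ into the Hilbert-space inner product $\langle M x | M y \rangle$. Under this viewpoint, the hypothesis $\langle \psi_i | E | \phi_j \rangle = 0$ for all $i \in \mathbb{Z}_s, j \in \mathbb{Z}_t$ is exactly the statement that $M|\psi_i\rangle \perp M|\phi_j\rangle$ in the image of $M$, for every pair $(i,j)$.

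Next I would promote these pairwise orthogonalities to orthogonality between the whole images $M(\mathrm{span}\,\mathcal{S})$ and $M(\mathrm{span}\,\mathcal{T})$. The key point is a dimension count: $\{|\psi_i\rangle\}_{i=0}^{s-1}$ is a family of $s = |\mathcal{S}|$ mutually orthogonal (hence linearly independent) vectors lying in the $s$-dimensional space $\mathrm{span}\,\mathcal{S}$, so it is a basis of $\mathrm{span}\,\mathcal{S}$; similarly $\{|\phi_j\rangle\}_{j=0}^{t-1}$ is a basis of $\mathrm{span}\,\mathcal{T}$. By bilinearity of the inner product, $M u \perp M v$ for every $u \in \mathrm{span}\,\mathcal{S}$ and $v \in \mathrm{span}\,\mathcal{T}$, equivalently $\langle u | E | v \rangle = 0$.

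I would then specialize $u = |s\rangle \in \mathcal{S}$ and $v = |t\rangle \in \mathcal{T}$ to read off the matrix entries: $a_{s,t} = \langle s | E | t \rangle = 0$ for every $|s\rangle \in \mathcal{S}$ and $|t\rangle \in \mathcal{T}$, which is precisely ${}_\mathcal{S} E_\mathcal{T} = \mathbf{0}$. The companion identity ${}_\mathcal{T} E_\mathcal{S} = \mathbf{0}$ is then immediate from Hermiticity, since $E = M^\dagger M$ satisfies $E^\dagger = E$ and hence $a_{t,s} = \overline{a_{s,t}} = 0$.

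There is no serious obstacle here; the argument is essentially a one-line consequence of the positive-semidefinite factorization. The only subtlety worth being explicit about is the interpretation of ``orthogonal set spanned by $\mathcal{S}$'' — it must mean exactly $s$ nonzero mutually orthogonal vectors whose span equals $\mathrm{span}\,\mathcal{S}$, which is automatic from the stated cardinalities $s = |\mathcal{S}|$ and $t = |\mathcal{T}|$, so $\{|\psi_i\rangle\}$ and $\{|\phi_j\rangle\}$ are genuine bases of $\mathrm{span}\,\mathcal{S}$ and $\mathrm{span}\,\mathcal{T}$ respectively, justifying the extension by linearity used above.
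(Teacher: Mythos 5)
Your proof is correct and is essentially the same argument as the paper's: both rest on the observation that the orthogonal families are bases of $\mathrm{span}\,\mathcal{S}$ and $\mathrm{span}\,\mathcal{T}$ (by the cardinality count), extend the zero conditions by (sesqui)linearity of $\langle\cdot|E|\cdot\rangle$ to the computational basis vectors in $\mathcal{S}$ and $\mathcal{T}$, and obtain the second block from $E^\dagger=E$. The only cosmetic difference is that you phrase the extension via the inner product $\langle Mu|Mv\rangle$, whereas the paper works directly with the form $\langle u|E|v\rangle$; the factorization $E=M^\dagger M$ plays no essential role beyond guaranteeing Hermiticity.
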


	\begin{proof}
		Since $\{\ket{\psi_i}\}_{i=0}^{s-1}$, $\{\ket{\phi_j}\}_{j=0}^{t-1}$ are two orthogonal sets spanned by $\cS$ and $\cT$ respectively, and $\dim (\text{span} \ \cS)=s$, $\dim (\text{span} \ \cT)=t$, it implies that the subspaces satisfy
		\begin{equation}\label{eq:subspaces}
			\begin{aligned}
				\text{span}\{\ket{\psi_0},\ket{\psi_1},\ldots,\ket{\psi_{s-1}}\}&=	\text{span} \ \cal{S},\\
				\text{span}\{\ket{\phi_0},\ket{\phi_1},\ldots,\ket{\phi_{t-1}}\}&=	\text{span} \ \cal{T}.
			\end{aligned}
		\end{equation}
		For any $|k\rangle \in \cS $ and $|\ell\rangle\in \cT$, by Eq.~\eqref{eq:subspaces}, they are  a linear combination of $\{|\psi_i\rangle\}_{i=0}^{s-1}$  and  $\{|\phi_j\rangle\}_{j=0}^{t-1}$ respectively. Then by the given conditions
		$
		\bra{\psi_i}E\ket{\phi_j}=0 \  (\forall \ i\in \mathbb{Z}_s,j\in\mathbb{Z}_t),
		$
		we would obtain
		\begin{equation}
			a_{k,\ell}=\bra{k}E\ket{\ell}=0.
		\end{equation}
		It means that  ${}_\mathcal{S}E_{\mathcal{T}}=\mathbf{0}$.  Since $E^{\dagger}=E$, we also have ${}_\mathcal{T}E_{\mathcal{S}}=\mathbf{0}$.
	\end{proof}
	\vspace*{0.4cm}	
	
	\subsection{The proof of Lemma~\ref{lem:trivial}}
	\begin{lemma}[Block Trivial  Lemma]
		Let  an  $n\times n$ matrix $E=(a_{i,j})_{i,j\in\bbZ_n}$ be the matrix representation of an operator  $E=M^{\dagger}M$  under the basis  $\cB:=\{\ket{0},\ket{1},\ldots,\ket{n-1}\}$. Given a nonempty  subset $\cS:=\{\ket{u_0},\ket{u_1},\ldots,\ket{u_{s-1}}\}$  of $\cB$, let $\{\ket{\psi_j} \}_{j=0}^{s-1}$ be an orthogonal  set spanned by $\cS$.     Assume that $\langle \psi_i|E |\psi_j\rangle=0$ for any $i\neq j\in \mathbb{Z}_s$.  If there exists a state $|u_t\rangle \in\cS$,  such that $ {}_{\{|u_t\rangle\}}E_{\cS\setminus \{|u_t\rangle\}}=\mathbf{0}$  and $\langle u_t|\psi_j\rangle \neq 0$  for any $j\in \mathbb{Z}_s$, then  $E_{\cS}\propto \mathbb{I}_{\cS}$.  (Note that if we consider $\{\ket{\psi_j} \}_{j=0}^{s-1}$ as the Fourier basis, i.e. $\ket{\psi_j}=\sum_{i=0}^{s-1}w_s^{ij}\ket{u_i}$ for  $j\in \mathbb{Z}_s$, then it must have $\langle u_t|\psi_j\rangle \neq 0$  for any $j\in \mathbb{Z}_s$).
	\end{lemma}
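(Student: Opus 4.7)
The plan is to view $E_\cS$ as a Hermitian operator on the $s$-dimensional subspace $\text{span}(\cS)$, set $c := \langle u_t|E|u_t\rangle$, and prove that the Hermitian operator $F := E_\cS - c\,\mathbb{I}_\cS$ vanishes. This would immediately give $E_\cS = c\,\mathbb{I}_\cS \propto \mathbb{I}_\cS$. The argument combines two independent pieces of information: that $|u_t\rangle$ lies in the kernel of $F$, and that $F$ is simultaneously diagonal in the orthogonal basis $\{|\psi_j\rangle\}$.

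First, I would verify that $|u_t\rangle \in \ker F$. The zero-block hypothesis $_{\{|u_t\rangle\}}E_{\cS\setminus\{|u_t\rangle\}}=\mathbf{0}$ says that the entries $\langle u_t|E|u_k\rangle$ vanish for every $|u_k\rangle\in\cS$ with $k\neq t$; by Hermiticity of $E=M^\dg M$ the symmetric column entries $\langle u_k|E|u_t\rangle$ vanish as well. Hence the only nonzero entry in row $t$ (or column $t$) of $E_\cS$ is the diagonal entry $(E_\cS)_{t,t}=c$. Expanding $E_\cS|u_t\rangle$ in the $\cS$-basis gives $E_\cS|u_t\rangle = c|u_t\rangle$, so $F|u_t\rangle=0$.

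Second, I would show that $F$ is diagonal in the orthonormalization $|\tilde\psi_j\rangle := |\psi_j\rangle/\|\psi_j\|$ of the given orthogonal set. Since $\{|\psi_j\rangle\}_{j=0}^{s-1}$ is an orthogonal family of $s$ vectors in the $s$-dimensional space $\text{span}(\cS)$, it is a basis, and $\{|\tilde\psi_j\rangle\}$ is an orthonormal basis. For $i\neq j$, orthogonality gives $\langle\psi_i|\psi_j\rangle=0$ and the hypothesis gives $\langle\psi_i|E|\psi_j\rangle=0$, so subtracting yields $\langle\psi_i|F|\psi_j\rangle=0$. Therefore $F$ admits the spectral expansion $F=\sum_{j=0}^{s-1} f_j\,|\tilde\psi_j\rangle\langle\tilde\psi_j|$ on $\text{span}(\cS)$.

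Finally, I would combine the two facts. Expanding $|u_t\rangle$ in the orthonormal basis gives $|u_t\rangle = \sum_j \langle\tilde\psi_j|u_t\rangle\,|\tilde\psi_j\rangle$, so
\begin{equation*}
0 = F|u_t\rangle = \sum_{j=0}^{s-1} f_j\,\langle\tilde\psi_j|u_t\rangle\,|\tilde\psi_j\rangle,
\end{equation*}
and linear independence of $\{|\tilde\psi_j\rangle\}$ forces $f_j\,\langle\tilde\psi_j|u_t\rangle = 0$ for every $j$. The remaining hypothesis $\langle u_t|\psi_j\rangle\neq 0$ eliminates the second factor, giving $f_j=0$ for all $j$, hence $F=0$ and $E_\cS = c\,\mathbb{I}_\cS$. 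I do not expect a genuine obstacle here; the argument is a short piece of Hermitian linear algebra. The only mild care points are (i) using Hermiticity of $E=M^\dg M$ to upgrade the one-sided zero-block to a symmetric one, and (ii) rescaling the orthogonal set $\{|\psi_j\rangle\}$ to an orthonormal basis before writing down the spectral decomposition of $F$.
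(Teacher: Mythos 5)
Your proof is correct, and it is essentially the paper's argument in cleaner operator language: both proofs use the orthogonality conditions to diagonalize $E_{\cS}$ in the $\{\ket{\psi_j}\}$ basis, use the zero row/column at $\ket{u_t}$ to make $\ket{u_t}$ an eigenvector with eigenvalue $a_{t,t}$, and then use $\braket{u_t}{\psi_j}\neq 0$ to force every eigenvalue to equal $a_{t,t}$. The paper carries this out by comparing the $t$-th rows of the matrix identity $E_{\cS}H = H\,\diag(\alpha_0,\ldots,\alpha_{s-1})$, which is the same computation as your $F\ket{u_t}=0$ expanded in the orthonormalized $\{\ket{\tilde\psi_j}\}$ basis.
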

	\begin{proof}
		Without loss of generality,  we can assume that $\ket{u_{i}}=\ket{i}$ for any $i\in\bbZ_s$.  Under this assumption, each of the states $\{\ket{\psi_j}\}_{j=0}^{s-1}$ can be expressed as a linear combination of $\{|i\rangle\}_{i=0}^{s-1}$, i.e.,  $\ket{\psi_j}=\sum_{i=0}^{s-1}h_{i,j}\ket{i}.$  And the set of states $\{\ket{\psi_j}=\sum_{i=0}^{s-1}h_{i,j}\ket{i}\}_{j=0}^{s-1}$ can be normalized as $\{\ket{\varphi_j}=\sum_{i=0}^{s-1} \widetilde{h}_{i,j}\ket{i}\}_{j=0}^{s-1}$.  Then   $H:=(\widetilde{h}_{i,j})_{i,j\in\bbZ_s}$ is an $s\times s$ unitary matrix. Let
		\begin{equation}
			F=\begin{pmatrix}
				H &\textbf{0}_{s\times (n-s)}\\
				\textbf{0}_{(n-s)\times s} &\textbf{0}_{(n-s)\times (n-s)}
			\end{pmatrix}
		\end{equation}
		be an  $n\times n$ matrix. We can define an operator on the space $\cH_n$,
		\begin{equation*}
			F=\sum_{i=0}^{s-1}\sum_{j=0}^{s-1}\widetilde{h}_{i,j}\ketbra{i}{j}.
		\end{equation*}
		Then the matrix $F$ is the matrix representation of the operator $F$ under the basis $\{\ket{0},\ket{1},\ldots,\ket{n-1}\}$.
		
		The set of states $\{\ket{\varphi_j}=\sum_{i=0}^{s-1}\widetilde{h}_{i,j}\ket{i}\}_{j=0}^{s-1}$ can be written as  $\{\ket{\varphi_j}=F\ket{j}\}_{j=0}^{s-1}$. Since $\langle \psi_i|E |\psi_j\rangle=0$ for any $i\neq j\in \mathbb{Z}_s$, it means that  $\langle \varphi_i|E |\varphi_j\rangle=0$ for any $i\neq j\in \mathbb{Z}_s$. Then we have
		\begin{equation}\label{subdiag}
			\bra{i}F^{\dagger}EF\ket{j}=0  \quad \text{for} \ i\neq j\in \mathbb{Z}_s.
		\end{equation}
		 The Eq. \eqref{subdiag} implies that	\begin{equation}
			H^{\dagger}E_{\cS} H=\diag(\alpha_0 \ \alpha_1 \ \cdots \ \alpha_{s-1}),
		\end{equation}
		where $\alpha_i\in \bbC$ for $i\in \bbZ_s$. Since $H$ is a unitary matrix, we have
		\begin{equation}
			E_{\cS} H=H\diag(\alpha_0 \ \alpha_1 \ \cdots \ \alpha_{s-1}).
		\end{equation}
		Since $ {}_{\{|t\rangle\}}E_{\cS\setminus \{|t\rangle\}}=\mathbf{0}$, the $t$-th row of $E_{\cS}$ is $(0 \ 0 \ \cdots \ a_{t,t} \ \cdots \ 0)$.  Then the $t$-th row of $E_{\cS}H$ is  $(a_{t,t}\widetilde{h}_{t,0} \ a_{t,t}\widetilde{h}_{t,1} \ \cdots \ a_{t,t}\widetilde{h}_{t,t}  \ \cdots \ a_{t,t}\widetilde{h}_{t,s-1} )$. Furthermore, the $t$-th row of $H\diag(\alpha_0 \ \alpha_1 \ \cdots \ \alpha_{s-1})$ is $(\alpha_0\widetilde{h}_{t,0} \ \alpha_1\widetilde{h}_{t,1} \ \cdots \ \alpha_{s-1}\widetilde{h}_{t,s-1})$. Then $a_{t,t}\widetilde{h}_{t,j}=\alpha_j\widetilde{h}_{t,j}$ for any $j\in\bbZ_s$.  Since  $\langle t|\psi_j\rangle= h_{t,j}\neq 0$ for any $j\in\bbZ_s$, i.e. $\widetilde{h}_{t,j}\neq 0$ for any $j\in\bbZ_s$, we have $\alpha_j=a_{t,t}$ for any $j\in\bbZ_s$.  Then
		\begin{equation}
			E_{\cS}=H\diag(a_{t,t} \ \a_{t,t}\ \cdots \ a_{t,t})H^{\dagger}=\diag(a_{t,t} \ \a_{t,t}\ \cdots \ a_{t,t}).
		\end{equation}
		Thus, $E_{\cS}\propto \mathbb{I}_{\cS}.$
	\end{proof}
	\vspace*{0.4cm}

	\section{Two more lemmas used in this paper}\label{Appendix:2lemmas}

	\begin{lemma}\label{lem:cyc}
		Let $\{\ket{\psi}\}\subset\otimes_{i=1}^{n}\cH_{A_i}$ be a set of orthogonal states. Define  $B_1=\{A_2A_3\ldots A_n\}$, $B_2=\{A_3\ldots A_nA_1\}, B_3=\{A_4\ldots A_nA_1A_2\}, \ldots, B_n=\{A_1\ldots A_{n-1}\}$. If $B_i$ party can only perform a trivial orthogonality-preserving POVM for any $1\leq i\leq n$,  then the set  $\{\ket{\psi}\}$  is of the strongest nonlocality.
	\end{lemma}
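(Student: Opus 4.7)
The plan is a simple lifting argument: the hypothesis provides triviality of orthogonality-preserving POVMs for the $n$ bipartitions of the form $\{A_i\}\,|\,B_i$, and I want to deduce the same triviality for an \emph{arbitrary} bipartition. The guiding idea is that a joint party which is a strict subset of some $B_i$ can only perform POVMs that are special cases of POVMs on $B_i$ (after padding with the identity on the remaining subsystems), and this ``extend by identity'' operation interacts well with both triviality and orthogonality-preservation.

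Concretely, fix any bipartition $T\,|\,T^c$ of $\{A_1,\ldots,A_n\}$ with $T$ and $T^c$ both nonempty; by the definition of strongest nonlocality it suffices to show that every orthogonality-preserving POVM on the joint party $T$ is trivial, since the same argument applied to $T^c$ handles the other side of the bipartition. Because $T^c$ is nonempty I can pick $A_i \in T^c$, which gives the inclusion $T \subseteq B_i$. Given an orthogonality-preserving POVM $\{E_k = M_k^{\dagger}M_k\}$ on $T$, I would define its lift $E_k' := E_k \otimes \mathbb{I}_{B_i\setminus T}$ on $B_i$. Completeness of $\{E_k'\}$ is immediate from completeness of $\{E_k\}$, and orthogonality-preservation follows from the identity $\bra{\psi}(E_k' \otimes \mathbb{I}_{A_i})\ket{\phi} = \bra{\psi}(E_k \otimes \mathbb{I}_{T^c})\ket{\phi}$, whose right-hand side vanishes for any two distinct states $\ket{\psi},\ket{\phi}$ in the set by the assumption on $\{E_k\}$.

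The hypothesis of the lemma then forces $E_k' \propto \mathbb{I}_{B_i}$. Writing $\mathbb{I}_{B_i} = \mathbb{I}_T \otimes \mathbb{I}_{B_i\setminus T}$ and comparing with $E_k \otimes \mathbb{I}_{B_i\setminus T}$, the common tensor factor $\mathbb{I}_{B_i\setminus T}$ cancels and yields $E_k \propto \mathbb{I}_T$, so the original POVM on $T$ is trivial, as required. There is essentially no genuine obstacle: the argument is formal, and the only point that deserves a careful check is that the lifted POVM inherits the orthogonality-preservation property, which reduces to the set-theoretic observation $(B_i \setminus T)\sqcup\{A_i\}=T^c$ together with the tensor identity displayed above.
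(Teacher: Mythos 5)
Your proof is correct and follows essentially the same route as the paper's: both reduce an arbitrary bipartition $T\,|\,T^c$ to the hypothesis by observing that each side is contained in some $B_i$ (namely the one omitting a party from the other side). The only difference is that you spell out the ``pad with identity'' lifting argument that the paper leaves implicit, which is a worthwhile clarification but not a different proof.
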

	\begin{proof}
		For any nontrivial bipartition $A_{i_1}\ldots A_{i_j}| A_{i_{j+1}} \ldots A_{i_n}$ of the subsystems, where $(i_1,i_2,\cdots,i_n)$ is a permutation of $(1,2,\cdots,n)$ and $1\leq j\leq n-1$.
		There exist some $r,s\in\{1,2,\cdots,n\}$, such that $A_{i_1}\ldots A_{i_j}\subset B_r$ and $A_{i_{j+1}} \ldots A_{i_n}\subset B_s$. Therefore,  both parties $A_{i_1}\ldots A_{i_j}$ and $A_{i_{j+1}} \ldots A_{i_n}$ can only perform a trivial orthogonality-preserving POVM.
	\end{proof}
	\vspace*{0.4cm}

	\begin{lemma}\label{lem:nonorthogonal}
		Let $\ket{\eta_1^{(d-2k)}}=\sum_{t=k}^{d-2-k}w_{d-1-2k}^{t-k}\ket{t}$, $\ket{\xi_1^{(d-2k)}}=\sum_{t=k}^{d-2-k}w_{d-1-2k}^{t-k}\ket{t+1}$. Then  	$\braket{\xi_1^{(d-2\ell_1)}}{\xi_1^{(d-2\ell_2)}}\neq 0$, 	$\braket{\eta_1^{(d-2\ell_3)}}{\eta_1^{(d-2\ell_4)}}\neq 0$, and $\braket{\xi_1^{(d-2\ell_5)}}{\eta_1^{(d-2\ell_6)}}\neq 0$, where $d-2\ell_i\geq 3$ for $1\leq i\leq 6$.
	\end{lemma}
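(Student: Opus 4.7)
The plan is to write each inner product explicitly in the computational basis, observe that on the overlap of the two supports it becomes a geometric sum of roots of unity, and then show in each case that the sum is nonzero. Setting $n_i:=d-1-2\ell_i\ge 2$ and re-indexing, one has $\ket{\eta_1^{(d-2\ell_i)}}=\sum_{s=0}^{n_i-1} w_{n_i}^{s}\ket{s+\ell_i}$ and $\ket{\xi_1^{(d-2\ell_i)}}=\sum_{s=0}^{n_i-1} w_{n_i}^{s}\ket{s+\ell_i+1}$. Each state is thus a ``first-harmonic'' Fourier vector supported on a consecutive block of basis labels, shifted by $\ell_i$ (or $\ell_i+1$).

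For $\braket{\eta_1^{(d-2\ell_3)}}{\eta_1^{(d-2\ell_4)}}$, and identically for $\braket{\xi_1^{(d-2\ell_1)}}{\xi_1^{(d-2\ell_2)}}$, I would assume without loss of generality $\ell_3\le\ell_4$, so that the support of the second vector sits inside that of the first. Factoring out an irrelevant unimodular prefactor, the inner product reduces to $\sum_{u=0}^{n_4-1} z^{u}$ with $z=w_{n_4}\,\overline{w_{n_3}}=e^{2\pi i(1/n_4-1/n_3)}$. If $n_3=n_4$ this equals $n_4\ge 2$. Otherwise $z\ne 1$ and the sum is $(z^{n_4}-1)/(z-1)$. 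Since $z^{n_4}=e^{-2\pi i n_4/n_3}=1$ would force $n_3\mid n_4$, which fails when $0<n_4<n_3$, the sum is nonzero.

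The $\xi$-$\eta$ case follows the same template, but the extra $+1$ shift between the $\xi$ and $\eta$ vectors changes the overlap of supports. A short case analysis on the sign of $\ell_5-\ell_6$ shows the overlap is a consecutive interval of length $n_5$ (when $\ell_5>\ell_6$), $n_6$ (when $\ell_5<\ell_6$), or $n_5-1=n_6-1$ (when $\ell_5=\ell_6$, which forces $n_5=n_6$). In the first two subcases the previous argument applies verbatim and yields a nonzero value. The main subtlety, and the only real obstacle, is the equal-$\ell$ subcase, where the overlap length drops by one and the divisibility-style argument does not cleanly match; one disposes of it by observing that $z=1$ there, so the sum equals $n_5-1\ge 1$, which is again nonzero. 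Apart from this routine case split, the proof is a standard computation with geometric sums of roots of unity.
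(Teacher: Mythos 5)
Your proposal is correct and takes essentially the same route as the paper's proof: both reduce each inner product to a geometric sum $\sum_{u=0}^{k-1}z^u$ with $z=\overline{w_{n_i}}w_{n_j}$ over the overlap of the two supports, and show it is nonzero --- the paper by observing that $k/n_j-k/n_i$ is a nonzero non-integer in $(-1,1)$ so $z^k\neq 1$, you by the equivalent divisibility observation, with the same three-way case split on the sign of $\ell_5-\ell_6$ (including the $z=1$, length-$(n_5-1)$ subcase) appearing in both.
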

	\begin{proof}
		Let  $w_{d_1}=e^{\frac{2\pi i}{d_1}}$,  $w_{d_2}=e^{\frac{2\pi i}{d_2}}$, where $d_1,d_2$ are positive integers. We claim that if $0<k\leq \min\{d_1,d_2\}$, then	$\sum_{j=0}^{k-1}\overline{w_{d_1}^j}w_{d_2}^j\neq 0$.
		
		If $d_1=d_2$, then 	$\sum_{j=0}^{k-1}\overline{w_{d_1}^j}w_{d_2}^j=k\neq 0$. If $d_1\neq d_2$, $\frac{k}{d_1},$ and $\frac{k}{d_2}$ are two different elements lying in the interval $(0,1]$.  Therefore, $\frac{k}{d_2}-\frac{k}{d_1}$ cannot be an integer. Then we can obtain that  $(\overline{w_{d_1}}w_{d_2})^k=e^{(\frac{k}{d_2}-\frac{k}{d_1})2\pi i}\neq 1$. Therefore,
		\begin{equation}
			\sum_{j=0}^{k-1}\overline{w_{d_1}^j}w_{d_2}^j=	\sum_{j=0}^{k-1}(\overline{w_{d_1}}w_{d_2})^j=\frac{1-(\overline{w_{d_1}}w_{d_2})^k}{1-\overline{w_{d_1}}w_{d_2}}\neq 0.
		\end{equation}
		
		Without loss of generality, we assume that $\ell_1\leq \ell_2$ and $\ell_3\leq \ell_4$.
		\begin{equation*}
			\begin{aligned}
				\braket{\xi_1^{(d-2\ell_1)}}{\xi_1^{(d-2\ell_2)}}&=w_{d-1-2\ell_1}^{\ell_1-\ell_2}\sum_{j=0}^{d-2-2\ell_2}\overline{w_{d-1-2\ell_1}^j}w_{d-1-2\ell_2}^j\neq 0,\\
				\braket{\eta_1^{(d-2\ell_3)}}{\eta_1^{(d-2\ell_4)}}&=w_{d-1-2\ell_3}^{\ell_3-\ell_4}\sum_{j=0}^{d-2-2\ell_4}\overline{w_{d-1-2\ell_3}^j}w_{d-1-2\ell_4}^j\neq 0,
			\end{aligned}
		\end{equation*}
		by the above claim. If  $\ell_5\leq \ell_6-1$, then 	\begin{equation*}
			\braket{\xi_1^{(d-2\ell_5)}}{\eta_1^{(d-2\ell_6)}}=w_{d-1-2\ell_5}^{\ell_5+1-\ell_6}\sum_{j=0}^{d-2-2\ell_6}\overline{w_{d-1-2\ell_5}^j}w_{d-1-2\ell_6}^j\neq 0.
		\end{equation*}
		If  $\ell_5= \ell_6$, then 	
		\begin{equation*}
			\braket{\xi_1^{(d-2\ell_5)}}{\eta_1^{(d-2\ell_6)}}=w_{d-1-2\ell_6}^{\ell_5+1-\ell_6}\sum_{j=0}^{d-3-\ell_6-\ell_5}\overline{w_{d-1-2\ell_5}^j}w_{d-1-2\ell_6}^j\neq 0.
		\end{equation*}	
		If  $\ell_5\geq \ell_6+1$, then 	
		\begin{equation*}
			\braket{\xi_1^{(d-2\ell_5)}}{\eta_1^{(d-2\ell_6)}}=w_{d-1-2\ell_6}^{\ell_5+1-\ell_6}\sum_{j=0}^{d-2-2\ell_5}\overline{w_{d-1-2\ell_5}^j}w_{d-1-2\ell_6}^j\neq 0.
		\end{equation*}	
	\end{proof}

	\vspace*{0.4cm}

	\section{The proof of Proposition~\ref{pro:444stronglyupb}}\label{Appendix:upb444}
	
	\begin{figure}[h]
		\centering
		\includegraphics[scale=0.9]{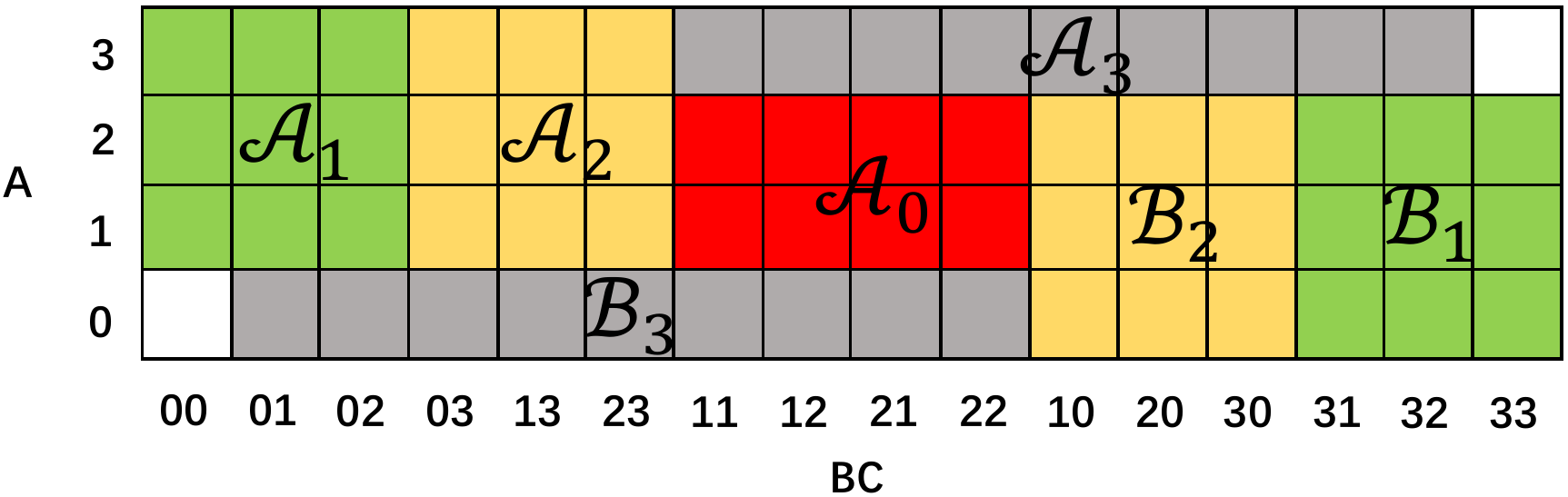}
		\caption{The corresponding $4\times 16$ grid of  $\cup_{i=1}^{3}\{\cA_i,\cB_i\}\cup \{\cA_0\}$ in $A|BC$ bipartition.  For example, $\cA_1$ correspond to the $3\times 3$ grid $\{(1,2,3)\times (00,01,02)\}$.  }  \label{fig:tite416}
	\end{figure}
	\begin{proof}
		The seven subsets  $\cA_0, \cA_i,\cB_i (i=1,2,3)   $ in $A|BC$ bipartition correspond to the seven blocks of $4\times 16$ grid in Fig.~\ref{fig:tite416}.
		Let $B$ and $C$ come together to  perform a joint  orthogonality-preserving POVM $\{E=M^{\dagger}M\}$, where $E=(a_{ij,k\ell})_{i,j,k,\ell\in\bbZ_4}$. Then the postmeasurement states $\{\bbI\otimes M\ket{\psi}\mid\ket{\psi}\in \{\cup_{i=1}^{3}\{\cA_i,\cB_i\}\cup \{\cA_0\} \cup\{\ket{S}\}\} \}$ should be mutually orthogonal.
		
		\noindent{\bf Step 1} Since $\ket{\xi_1}_A, \ket{\phi_1}_A, \ket{\eta_1}_A$ are mutually non-orthogonal, applying Lemma~~\ref{lem:zero} to any two elements of $\{\cA_{1}(\ket{\xi_1}_A),\cA_{2}(\ket{\xi_1}_A),\cA_{0}(\ket{\phi_1}_A),\cB_{2}(\ket{\eta_1}_A),\cB_{1}(\ket{\eta_1}_A)\}$, we obtain that $E$ is a block diagonal matrix,
		\begin{equation}\label{eq:matrixM}
			E=E_{\cA_1^{(A)}}\oplus E_{\cA_2^{(A)}}\oplus E_{\cA_0^{(A)}}\oplus E_{\cB_2^{(A)}}\oplus E_{\cB_1^{(A)}}.
		\end{equation}
			The intuitive figure of $E$ can be shown in Fig.~\ref{fig:tite444} (I).
			
		   	\begin{figure}[t]
			\centering
			\includegraphics[scale=0.5]{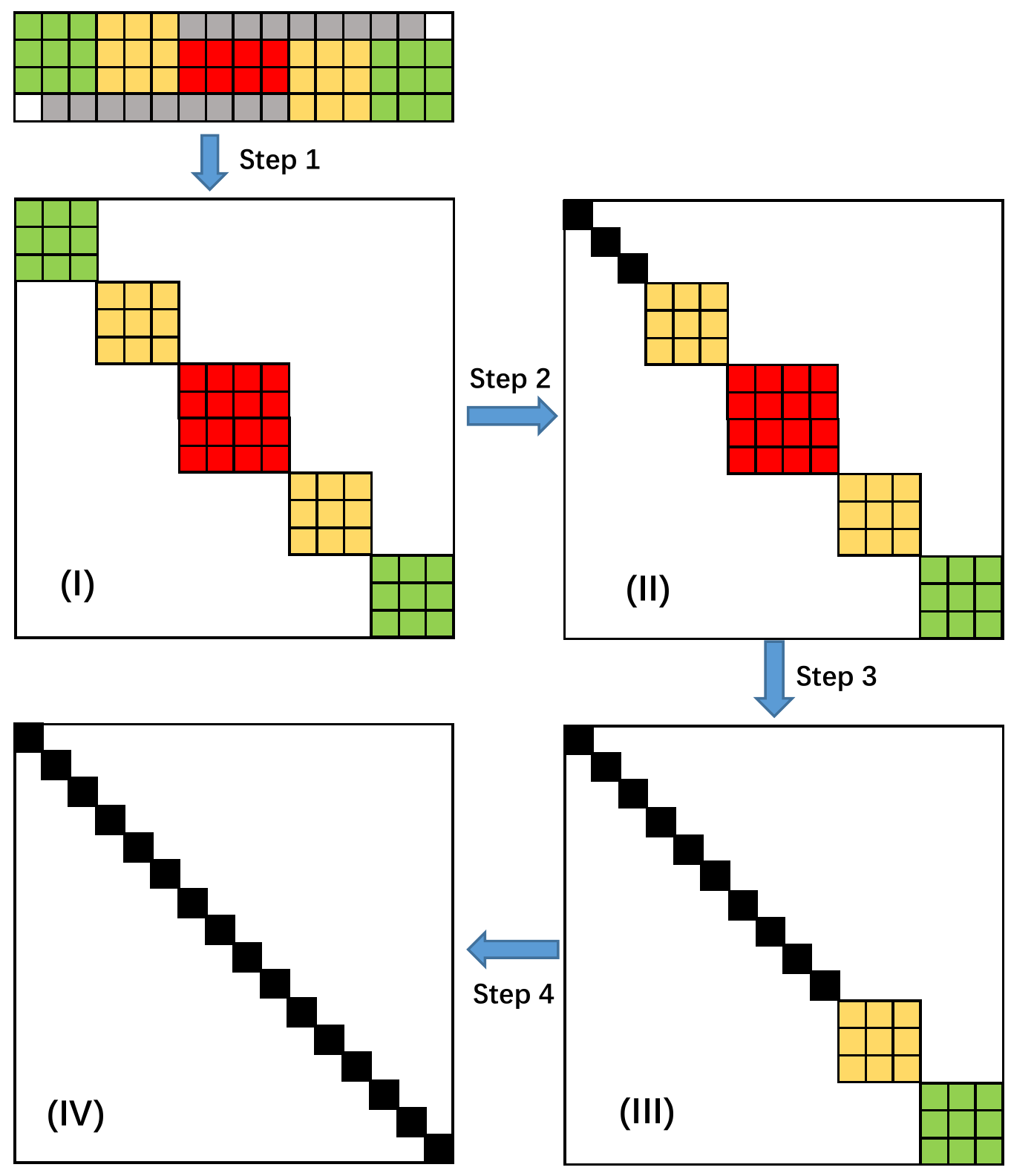}
			\caption{Proving steps for the strongly nonlocal UPB in $4\otimes 4 \otimes 4$.  }  \label{fig:tite444}
		\end{figure}
	
		\noindent{\bf Step 2} By using the states in $\{\ket{\xi_1}_A\ket{0}_B\ket{\eta_i}_C\}_{i\in\bbZ_3}\subset \cA_1$, we have
		\begin{equation}\label{eq:operatorMu}
			{}_B\bra{0}{}_C\bra{\eta_i}E\ket{0}_B\ket{\eta_j}_C=0, \quad \text{for} \  i\neq j\in\bbZ_3.
		\end{equation}
		Then there exist    real numbers $a_s$ for all $s\in\bbZ_3$  such that
		\begin{equation}
			E_{\cA_1^{(A)}}=\sum_{s=0}^2a_s\ket{0}_B\bra{0}\otimes\ket{\eta_s}_C\bra{\eta_s},
		\end{equation}
		as $E=E^{\dagger}$. In the same way, there exist real numbers $a_s,b_s,c_t,d_t,e_{s,t}$ such that the operator
		\begin{equation}\label{eq:operatorM}
			\begin{aligned}
				E=&\sum_{s=0}^2a_s\ket{0}_B\bra{0}\otimes\ket{\eta_{s}}_C\bra{\eta_{s}}+\sum_{s=0}^2 b_s\ket{\eta_{s}}_B\bra{\eta_{s}}\otimes\ket{3}_C\bra{3}+\sum_{t=0}^2 c_t\ket{\xi_{t}}_B\bra{\xi_{t}}\otimes\ket{0}_C\bra{0}\\
				&+\sum_{t=0}^2 d_t\ket{3}_B\bra{3}\otimes\ket{\xi_{t}}_C\bra{\xi_{t}}+\sum_{s=0}^1\sum_{t=0}^1 e_{s,t}\ket{\phi_s}_B\bra{\phi_s}\otimes\ket{\phi_t}_C\bra{\phi_t}.
			\end{aligned}
		\end{equation}	
		By using 	those states $ \{\ket{0}_A\ket{\eta_i}_B\ket{\xi_j}_C\}_{(i,j)\in\mathbb{Z}_3\times \mathbb{Z}_3\setminus \{(0,0)\}}=\cB_3$, we can show that
		\begin{equation}\label{theOrthogonal}
			{}_B\bra{\eta_k} {}_C\bra{\xi_\ell} E \ket{\eta_i}_B\ket{\xi_j}_C=0,  \quad \text{for} \ (k,\ell)\neq (i,j) \in\mathbb{Z}_3\times \mathbb{Z}_3\setminus \{(0,0)\}.
		\end{equation}
		Assume $k\neq i$. By Eq.~\eqref{eq:operatorM}, we have
		\begin{equation}\label{eq:thezero}
			\begin{aligned}
				0=&{}_B\bra{\eta_k} {}_C\bra{\xi_\ell}  E \ket{\eta_i}_B\ket{\xi_j}_C\\
				=&\sum_{s=0}^2a_s   \braket{ \eta_k}{0}_B \braket{0}{\eta_i}_B  \braket{\xi_\ell}{\eta_s}_C \braket{\eta_s}{\xi_j}_C + \sum_{s=0}^2b_s  \braket{\eta_k}{\eta_s}_B\braket{\eta_s}{\eta_i}_B\braket{\xi_\ell}{3}_C\braket{3}{\xi_j}_C\\
				&+\sum_{s=0}^1\sum_{t=0}^1e_{s,t}\braket{\eta_k}{\phi_s}_B\braket{\phi_s}{\eta_i}_B\braket{\xi_\ell}{\phi_t}_C\braket{\phi_t}{\xi_j}_C\\
				=&\sum_{s=0}^2a_s\braket{\xi_\ell}{\eta_s}_C \braket{\eta_s}{\xi_j}_C+\sum_{s=0}^1\sum_{t=0}^1e_{s,t}\braket{\eta_k}{\phi_s}_B\braket{\phi_s}{\eta_i}_B\braket{\xi_\ell}{\phi_t}_C\braket{\phi_t}{\xi_j}_C.
			\end{aligned}
		\end{equation}
		If $k=0$, $\ell=i=j=1$, and   $k=0$, $\ell=2$, $i=1$, $j=2$, then by Eq.~\eqref{eq:thezero}, we have
		\begin{equation}
			\left\{
			\begin{aligned}
				&a_0+4a_1+a_2-2e_{0,0}-2(1-w^2)(1-w)e_{0,1}=0,\\
				&a_0+a_1+4a_2-2e_{0,0}-2(1-w)(1-w^2)e_{0,1}=0.
			\end{aligned}\right.
		\end{equation}
		It implies $a_1=a_2$. Moreover, let $k=0$,  $\ell=1$, $i=1$, $j=0$, by Eq.~\eqref{eq:thezero}, we have
		\begin{equation}\label{eq:a0=a1}
			2a_0+2a_1-a_2-4e_{0,0}=0.
		\end{equation}
		Next, by using the states $\ket{S}$, $\ket{0}_A\ket{\eta_1}_B\ket{\xi_0}_C\in\cB_3$ and Eq.~\eqref{eq:operatorM}, we have
		\begin{equation}\label{eq:a1a2}
			0={}_B(\sum_{j=0}^3\langle j|){}_C(\sum_{k=0}^3 \langle k |)E\ket{\eta_1}_B\ket{\xi_0}_C=6a_0-8e_{0,0}.
		\end{equation}
		Then by Eqs.~\eqref{eq:a0=a1} and \eqref{eq:a1a2} and $a_1=a_2$, we would obtain  $a_0=a_1$. Thus $a_0=a_1=a_2$. It means that the operator
		\begin{equation}
			E_{\cA_1^{(A)}}\propto\sum_{s=0}^2\ket{0}_B\bra{0}\otimes\ket{\eta_s}_C\bra{\eta_s},
		\end{equation}
		which is equivalent to 
		\begin{equation}\label{eq:eA_1}
		E_{\cA_1^{(A)}}=k \bbI_{\cA_1^{(A)}}.
		\end{equation}
			The intuitive figure of $E$ can be shown in Fig.~\ref{fig:tite444} (II).

		\noindent{\bf Step 3}
		Considering $\ket{S}$ and $\{\ket{0}_A\ket{\eta_i}_B\ket{\xi_j}_C\}_{(i,j)\in\mathbb{Z}_3\times \mathbb{Z}_3\setminus \{(0,0)\}}=\cB_3$. By using Eqs.~\eqref{eq:matrixM}  and \eqref{eq:eA_1}, we have  the following equality
		\begin{equation}
			\sum_{s=0}^2\sum_{t=0}^2 {}_B \bra{s} {}_C\bra{t+1} E \ket{\eta_i}_B\ket{\xi_j}_C=\sum_{s=0}^3\sum_{t=0}^3 {}_B \bra{s} {}_C\bra{t} E \ket{\eta_i}_B\ket{\xi_j}_C=0.
		\end{equation}
		Moreover,  we have
		\begin{equation}
			\sum_{s=0}^2\sum_{t=0}^2\ket{s}_B\ket{t+1}_C=\ket{\eta_0}_B\ket{\xi_0}_C.
		\end{equation}
		Therefore, by using the states $\{\ket{S}\}\cup\{\ket{0}_A\ket{\eta_i}_B\ket{\xi_j}_C\}_{(i,j)\in\bbZ_3\times \bbZ_3\setminus\{(0,0)\}}$, we have
		\begin{equation}
			{}_B\bra{\eta_k}_C\bra{\xi_\ell}E\ket{\eta_i}_B\ket{\xi_j}_C=0,  \quad  \text{for} \ (k,\ell)\neq (i,j)\in\bbZ_3\times\bbZ_3.
		\end{equation}
		
		For any $\ket{t_1}_B\ket{t_2}_C\in \cA_1^{(A)}\cap\cB_3^{(A)}$, we have  ${}_{\{\ket{t_1}_B\ket{t_2}_C\}}E_{\cB_3^{(A)}\setminus \{\ket{t_1}_B\ket{t_2}_C\}}=\textbf{0}$ by Eqs.~\eqref{eq:matrixM}  and \eqref{eq:eA_1}. Moreover, ${}_B\bra{t_1}{}_C\braket{t_2}{\eta_i}_B\ket{\xi_j}_C\neq 0$ for $(i,j)\in\bbZ_3\times \bbZ_3$.  Applying Lemma~\ref{lem:trivial} to $\{\ket{\eta_i}_B\ket{\xi_j}_C\}_{(i,j)\in\bbZ_3\times \bbZ_3}$, we have
		\begin{equation}\label{eq:B3}
		E_{\cB_3^{(A)}}=k_1\bbI_{\cB_3^{(A)}}.
		\end{equation} Since $\cA_1^{(A)}\cap \cB_3^{(A)}\neq \emptyset$, it implies $k=k_1$. Thus, by Eqs.~\eqref{eq:eA_1}  and \eqref{eq:B3}, we obtain
		\begin{equation}
		E_{\cA_1^{(A)}\cup\cB_3^{(A)}}=k\bbI_{\cA_1^{(A)}\cup\cB_3^{(A)}}.
		\end{equation}
			The intuitive figure of $E$ can be shown in Fig.~\ref{fig:tite444} (III).
		
		\noindent{\bf Step 4}
		By the symmetry of Fig.~\ref{fig:tite416}, we can obtain  $E=k\bbI$. 			The intuitive figure of $E$ can be shown in Fig.~\ref{fig:tite444} (IV). 
		
		Thus,  $E$  is trivial. This completes the proof.
	\end{proof}
	\vspace{0.4cm}
	
	\section{The proof of Theorem~\ref{thm:dddstronglyupb}}\label{Appendix:upbddd}

		\begin{figure}[h]
		\centering
		\includegraphics[scale=0.5]{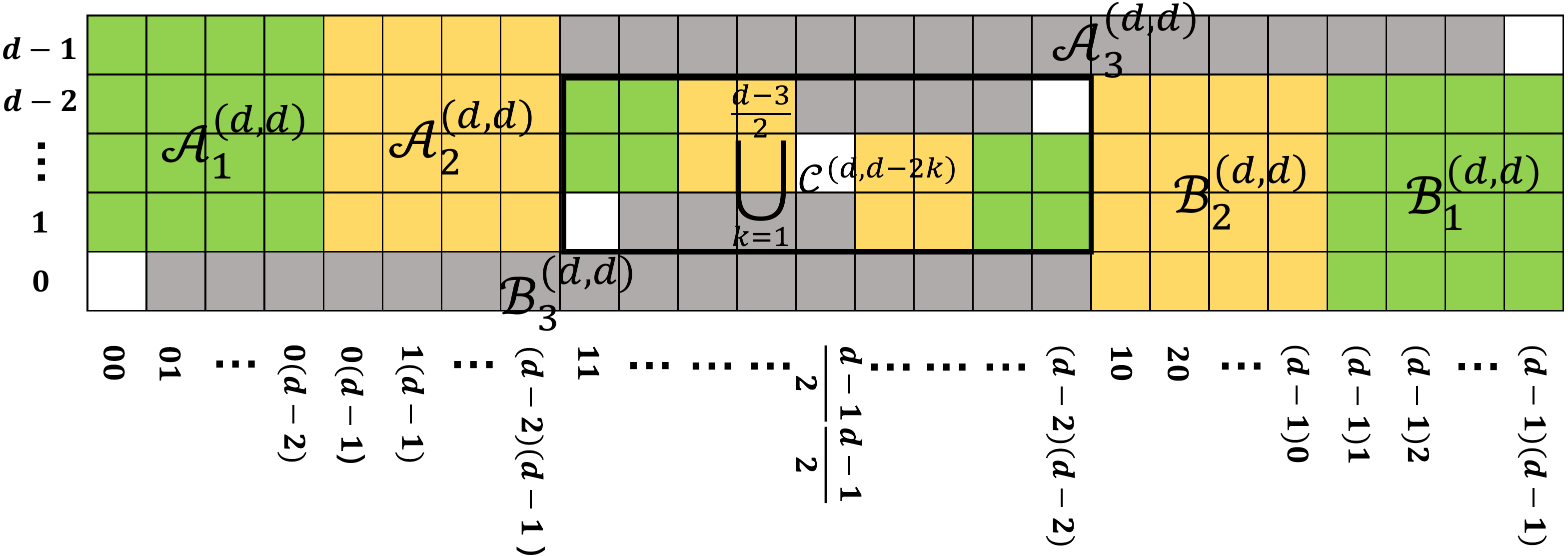}
		\caption{The corresponding $d\times d^2$ grid of $\cup_{k=0}^{\frac{d-3}{2}}\cC^{(d,d-2k)}$ when $d\geq 3$ is odd. }  \label{fig:titedodd}
	\end{figure}

	\begin{proof}
		(i)	We prove it by induction on $d$. We have shown that the conclusion holds when $d=3$ by Proposition \ref{pro:upb333}. Denote $\delta=\frac{d-1}{2}$. Assume  that $\{\cup_{k=0}^{\delta-2}\cC^{(d-2,(d-2)-2k)}\cup \{\ket{S_{d-2}}\}\}$   is  of the strongest nonlocality when $d-2\geq 3$. We need to show that  $\{\cup_{k=0}^{\delta-1}\cC^{(d,d-2k)}\cup \{\ket{S_d}\}\}$  is also of  the strongest nonlocality. Define a bijection, $\ket{j}\rightarrow \ket{j+1}$, then $$\{\cup_{k=0}^{\delta-2}\cC^{(d-2,(d-2)-2k)}\cup \{\ket{S_{d-2}}\}\}\rightarrow \{\cup_{k=1}^{\delta-1}\cC^{(d,d-2k)}\cup \{\ket{S_{d-2}'}\}\},$$ where $\ket{S_{d-2}'}=(\sum_{i=1}^{d-2}\ket{i})_A(\sum_{j=1}^{d-2}\ket{j})_B(\sum_{k=1}^{d-2}\ket{k})_C$. With this bijection, the set $\{\cup_{k=1}^{\delta-1}\cC^{(d,d-2k)}\cup \{\ket{S_{d-2}'}\}\}$ is also of the strongest nonlocality with respect to its domain subspace.
		
		The $6\delta$ subsets $ \cC^{(d,d-2k)}$ ($ k=0,1,\cdots,\delta-1,$    and   each $\cC^{(d,d-2k)}$ contains $6$ subsets) correspond to the blocks of the $d\times d^2$ grid in Fig.~\ref{fig:titedodd}.  Let $B$ and $C$ come together and perform the orthogonality-preserving joint POVM $\{E=M^{\dagger}M\}$, where $E=(a_{ij,k\ell})_{ i,j,k,\ell\in \bbZ_d}$. Then the postmeasurement states $\{\bbI\otimes M\ket{\psi}\mid \ket{\psi}\in \{\cup_{k=0}^{\delta-1}\cC^{(d,d-2k)}\cup \{\ket{S_d}\}\} \}$ should be mutually orthogonal.
		
		\noindent{\bf Step 1}  Since the states $\{\ket{\xi_1^{(d-2k)}}_A,\ket{\eta_1^{(d-2k)}}_A\}_{k=0}^{\delta-1}$ are mutually non-orthogonal by Lemma~\ref{lem:nonorthogonal} in Appendix~\ref{Appendix:2lemmas},  applying Lemma~\ref{lem:zero}  to any two  elements of $\{\cA_1^{(d,d-2k)}(\ket{\xi_1^{(d-2k)}}_A),$ $\cA_2^{(d,d-2k)}(\ket{\xi_1^{(d-2k)}}_A),$ $ \cB_2^{(d,d-2k)}(\ket{\eta_1^{(d-2k)}}_A),$ $ \cB_1^{(d,d-2k)}(\ket{\eta_1^{(d-2k)}}_A) \}_{k=0}^{\delta-1}$, we obtain that $E$ is a block diagonal matrix,
		except  $a_{\delta\delta,k\ell}$, $a_{ij,\delta\delta}$ for $0\leq i,j,k,\ell\leq d-1$. By using the states  $\{\ket{\xi_1^{(d)}}_A\ket{0}_B\ket{\eta_i^{(d)}}_C\}_{i\in\bbZ_{d-1}}\subset\cA_1^{(d,d)}$ and $\ket{d-1}_A\ket{\xi_1^{(d)}}_B\ket{\eta_1^{(d)}}_C\in\cA_3^{(d,d)}$, we have $a_{0j,\delta\delta}=0$ for $0\leq j\leq d-2$ by Lemma~\ref{lem:zero}. In the same way, we can obtain that $a_{i(d-1),\delta\delta}=0$ for $0\leq i\leq d-2$ by using the states $\{\ket{\xi_1^{(d)}}_A\ket{\eta_i^{(d)}}_B\ket{d-1}_C\}_{i=0}^{d-2}\subset\cA_2^{(d,d)}$ and $\ket{d-1}_A\ket{\xi_1^{(d)}}_B\ket{\eta_1^{(d)}}_C\in\cA_3^{(d,d)}$. By the symmetry of Fig.~\ref{fig:titedodd}, we can also show that $a_{i0,\delta\delta}=0$ for $1\leq i\leq d-1$, and $a_{(d-1)j,\delta\delta}=0$ for $1\leq j\leq d-1$.  Since $E^{\dagger}=E$, $E$ is a block diagonal matrix. It can be expressed by
		\begin{equation}\label{eq:matrix}
			E=E_{(\cA_1^{(d,d)})^{(A)}}\oplus E_{(\cA_2^{(d,d)})^{(A)}}\oplus E_{\cC}\oplus E_{(\cB_2^{(d,d)})^{(A)}}\oplus E_{(\cB_1^{(d,d)})^{(A)}},
		\end{equation}
		where $E_{\cC}:=E_{\{\cup_{k=1}^{\delta-1}\cC^{(d,d-2k)}\}^{(A)}}$.
		The intuitive figure of $E$ can be shown in Fig.~\ref{fig:titeddd} (I).
		
		\begin{figure}[t]
		\centering
		\includegraphics[scale=0.7]{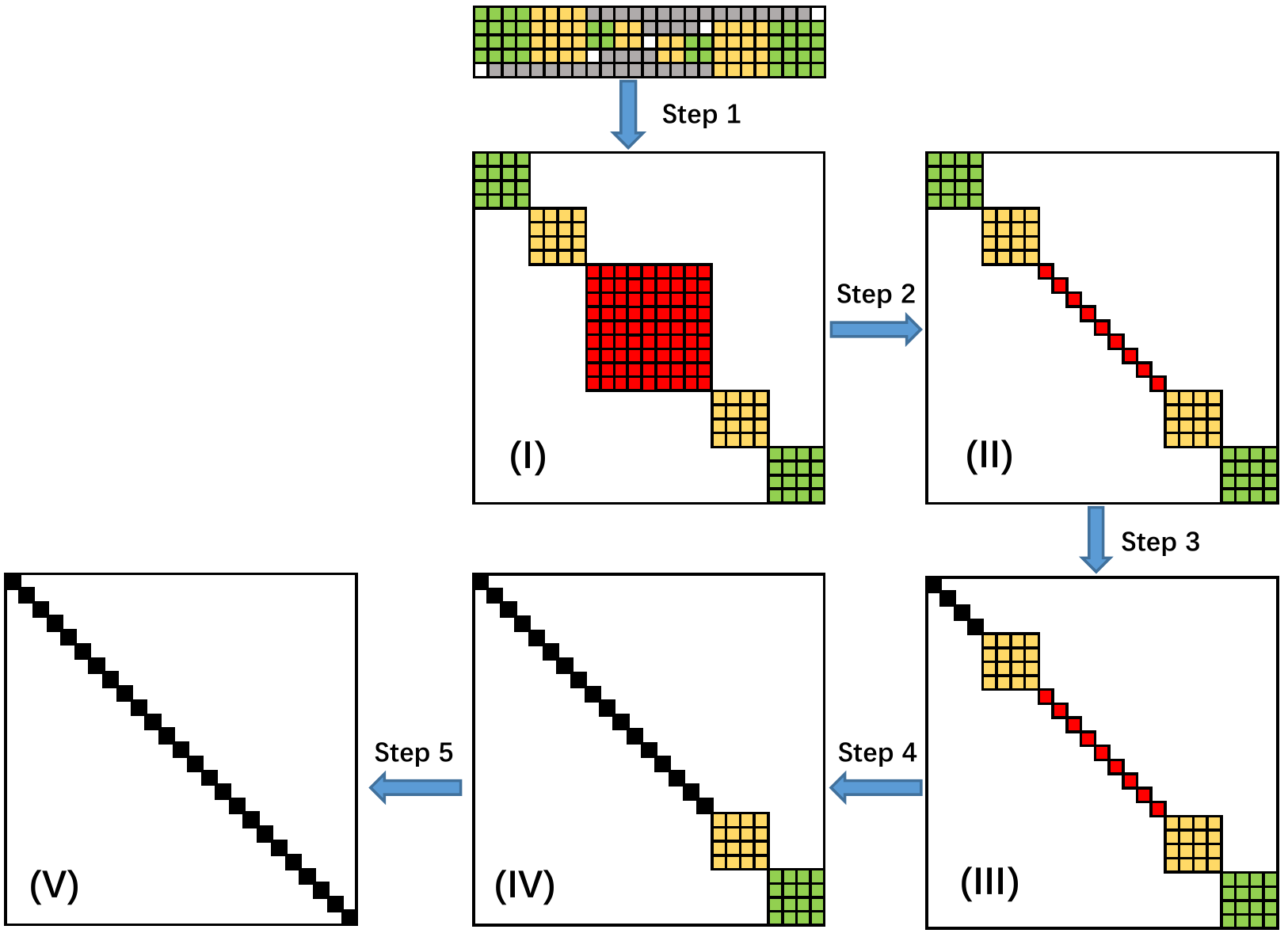}
		\caption{Proving steps for the strongly nonlocal UPBs in $d\otimes d \otimes d$ for $d\geq 5$.  }  \label{fig:titeddd}
	\end{figure}
		
		\noindent{\bf Step 2}  One notice that
		$
		\bra{S_d}\bbI\otimes E\ket{\psi}=\bra{S_{d-2}'}\bbI\otimes E\ket{\psi}
		$
		for any $\ket{\psi}\in\cup_{k=1}^{\delta-1}\cC^{(d,d-2k)}$. Therefore, for any pair of elements $|\phi\rangle$ and $|\psi\rangle$ in  $\{\cup_{k=1}^{\delta-1}\cC^{(d,d-2k)}\cup \{\ket{S_{d-2}'}\}\}$, we have
		$$
		\langle \psi | \bbI' \otimes E_{\cC} |\phi\rangle=\langle \psi | \bbI \otimes E |\phi\rangle=0,
		$$
		where $\bbI':=\sum_{k=1}^{d-2}|k\rangle_A\langle k|.$
		Since
		$\{\cup_{k=1}^{\delta-1}\cC^{(d,d-2k)}\cup \{\ket{S_{d-2}'}\}\}$ is of  the strongest nonlocality, one shows that $$E_{\cC}=\sum_{s=1}^{d-2}\sum_{t=1}^{d-2} L\ket{s}_B\bra{s}\otimes\ket{t}_C\bra{t}$$ for some $L$. Thus the intuitive figure of $E$ can be shown in Fig.~\ref{fig:titeddd} (II).
		
		\noindent{\bf Step 3}  By using the states $\{\ket{\xi_1^{(d)}}_A\ket{0}_B\ket{\eta_i^{(d)}}_C\}_{i\in \bbZ_{d-1}}\subset \cA_1^{(d,d)}$, we have
		\begin{equation}\label{eq:doperatorMu}
			{}_B\bra{0}{}_C\bra{\eta_i^{(d)}}E\ket{0}_B\ket{\eta_j^{(d)}}_C=0, \quad \text{for} \ i\neq j\in\bbZ_{d-1}.
		\end{equation}
		Then there exists a  real number $a_s$  for any $s\in\bbZ_{d-1}$ such that
		\begin{equation}
			E_{(\cA_1^{(d,d)})^{(A)}}=\sum_{s=0}^{d-2}a_s\ket{0}_B\bra{0}\otimes\ket{\eta_s^{(d)}}_C\bra{\eta_s^{(d)}},
		\end{equation}
		as $E=E^{\dagger}$. In the same way, there exist real numbers $a_s,b_s,c_t,e_t$ such that the operator
		\begin{equation}\label{eq:diagonal}
			\begin{aligned}
				E=&\sum_{s=0}^{d-2}a_s\ket{0}_B\bra{0}\otimes\ket{\eta_{s}^{(d)}}_C\bra{\eta_{s}^{(d)}}+\sum_{s=0}^{d-2} b_s\ket{\eta_{s}^{(d)}}_B\bra{\eta_{s}^{(d)}}\otimes\ket{d-1}_C\bra{d-1}+\sum_{t=0}^{d-2} c_t\ket{\xi_{t}^{(d)}}_B\bra{\xi_{t}^{(d)}}\otimes\ket{0}_C\bra{0}\\
				&+\sum_{t=0}^{d-2} e_t\ket{d-1}_B\bra{d-1}\otimes\ket{\xi_{t}^{(d)}}_C\bra{\xi_{t}^{(d)}}+\sum_{s=1}^{d-2}\sum_{t=1}^{d-2} L\ket{s}_B\bra{s}\otimes\ket{t}_C\bra{t}.
			\end{aligned}
		\end{equation}
		By using the states $ \{\ket{0}_A\ket{\eta_i^{(d)}}_B\ket{\xi_j^{(d)}}_C\}_{(i,j)\in\mathbb{Z}_{d-1}\times \mathbb{Z}_{d-1}\setminus \{(0,0)\}}=\cB_3^{(d,d)}$, we can get the following equality
		\begin{equation}\label{MainOrthogonal}
			{}_B\bra{\eta_k^{(d)}} {}_C\bra{\xi_\ell^{(d)}}  E \ket{\eta_i^{(d)}}_B\ket{\xi_j^{(d)}}_C=0, \quad \text{for} \ (k,\ell)\neq (i,j) \in\mathbb{Z}_{d-1}\times \mathbb{Z}_{d-1}\setminus \{(0,0)\}.
		\end{equation}
		We assume that $k\neq i$ and $\ell= j$. Then by Eq.~\eqref{eq:diagonal}, we have
		\begin{equation}\label{eq:mainzero}
			\begin{aligned}
				0=&{}_B\bra{\eta_k^{(d)}} {}_C\bra{\xi_\ell^{(d)}}  E \ket{\eta_i^{(d)}}_B\ket{\xi_\ell^{(d)}}_C\\
				=&\sum_{s=0}^{d-2}a_s   \braket{ \eta_k^{(d)}}{0}_B \braket{0}{\eta_i^{(d)}}_B  \braket{\xi_\ell^{(d)}}{\eta_s^{(d)}}_C \braket{\eta_s^{(d)}}{\xi_\ell^{(d)}}_C + \sum_{s=0}^{d-2}b_s  \braket{\eta_k^{(d)}}{\eta_s^{(d)}}_B\braket{\eta_s^{(d)}}{\eta_i^{(d)}}_B\braket{\xi_\ell^{(d)}}{d-1}_C\braket{d-1}{\xi_\ell^{(d)}}_C\\
				&+\sum_{s=1}^{d-2}\sum_{t=1}^{d-2}L\braket{\eta_k^{(d)}}{s}_B\braket{s}{\eta_i^{(d)}}_B\braket{\xi_\ell^{(d)}}{t}_C\braket{t}{\xi_\ell^{(d)}}_C\\
				=&\sum_{s=0}^{d-2}a_s\braket{\xi_\ell^{(d)}}{\eta_s^{(d)}}_C \braket{\eta_s^{(d)}}{\xi_\ell^{(d)}}_C+\sum_{s=1}^{d-2}\sum_{t=1}^{d-2}Lw_{d-1}^{(i-k)s}\\
				=&\sum_{s=0}^{d-2}a_s\braket{\xi_\ell^{(d)}}{\eta_s^{(d)}}_C \braket{\eta_s^{(d)}}{\xi_\ell^{(d)}}_C-(d-2)L.
			\end{aligned}
		\end{equation}
		There are two cases of the terms in the summation of the last equality.
		\begin{enumerate}[(a)]
			\item If $s= \ell$, then
			\begin{equation*}
				\braket{\xi_\ell^{(d)}}{\eta_s^{(d)}}_C\braket{\eta_s^{(d)}}{\xi_\ell^{(d)}}_C=\sum_{n=1}^{d-2}w_{d-1}^{\ell}\sum_{n=1}^{d-2}w_{d-1}^{-\ell}=(d-2)^2.
			\end{equation*}
			\item If $s\neq \ell$, then
			\begin{equation*}
				\braket{\xi_\ell^{(d)}}{\eta_s^{(d)}}_C\braket{\eta_s^{(d)}}{\xi_\ell^{(d)}}_C=\sum_{n=1}^{d-2}w_{d-1}^{ns-(n-1)\ell}\sum_{n=1}^{d-2}w_{d-1}^{(n-1)\ell-ns}=\sum_{n=1}^{d-2}w_{d-1}^{(n-1)(s-\ell)}\sum_{n=1}^{d-2}w_{d-1}^{(n-1)(\ell-s)}=1.
			\end{equation*}
		\end{enumerate}
		Therefore, Eq.~\eqref{eq:mainzero} is equivalent to
		\begin{equation}\label{eq:holdforall}
			\sum_{s=0}^{d-2}a_i+((d-2)^2-1)a_\ell-(d-2)L=0.
		\end{equation}
		However, the Eq.~\eqref{eq:holdforall} is satisfied for any $\ell\in \bbZ_{d-1}$. Thus, we have  $a_0=a_1=\cdots=a_{d-2}$. It implies that 
		\begin{equation}\label{eq:eA_11}
		E_{(\cA_1^{(d,d)})^{(A)}}=k\bbI_{(\cA_1^{(d,d)})^{(A)}}.
		\end{equation} 
		The intuitive figure of $E$ can be shown in Fig.~\ref{fig:titeddd} (III).

		\noindent{\bf Step 4}
			Considering $\ket{S}$ and $\{\ket{0}_A\ket{\eta_i^{(d)}}_B\ket{\xi_j^{(d)}}_C\}_{(i,j)\in\mathbb{Z}_{d-1}\times \mathbb{Z}_{d-1}\setminus \{(0,0)\}}=\cB_3^{(d,d)}$. By using Eqs.~\eqref{eq:matrix}  and \eqref{eq:eA_11}, we have  the following equality
	\begin{equation}
	\sum_{s=0}^{d-2}\sum_{t=0}^{d-2} {}_B \bra{s} {}_C\bra{t+1} E \ket{\eta_i^{(d)}}_B\ket{\xi_j^{(d)}}_C=\sum_{s=0}^{d-1}\sum_{t=0}^{d-1} {}_B \bra{s} {}_C\bra{t} E \ket{\eta_i^{(d)}}_B\ket{\xi_j^{(d)}}_C=0.
	\end{equation}
	Moreover,  we have
	\begin{equation}
	\sum_{s=0}^{d-2}\sum_{t=0}^{d-2}\ket{s}_B\ket{t+1}_C=\ket{\eta_0^{(d)}}_B\ket{\xi_0^{(d)}}_C.
	\end{equation}
	Therefore, by using the states $\{\ket{S}\}\cup\{\ket{0}_A\ket{\eta_i^{(d)}}_B\ket{\xi_j^{(d)}}_C\}_{(i,j)\in\bbZ_{d-1}\times \bbZ_{d-1}\setminus\{(0,0)\}}$, we have
	\begin{equation}
	{}_B\bra{\eta_k^{(d)}}_C\bra{\xi_\ell^{(d)}}E\ket{\eta_i^{(d)}}_B\ket{\xi_j^{(d)}}_C=0,  \quad  \text{for} \ (k,\ell)\neq (i,j)\in\bbZ_{d-1}\times\bbZ_{d-1}.
	\end{equation}
	
	For any $\ket{t_1}_B\ket{t_2}_C\in (\cA_1^{(d,d)})^{(A)}\cap(\cB_3^{(d,d)})^{(A)}$, we have  ${}_{\{\ket{t_1}_B\ket{t_2}_C\}}E_{(\cB_3^{(d,d)})^{(A)}\setminus \{\ket{t_1}_B\ket{t_2}_C\}}=\textbf{0}$ by Eqs.~\eqref{eq:matrix}  and \eqref{eq:eA_11}. Moreover, ${}_B\bra{t_1}{}_C\braket{t_2}{\eta_i^{(d)}}_B\ket{\xi_j^{(d)}}_C\neq 0$ for $(i,j)\in\bbZ_{d-1}\times \bbZ_{d-1}$.  Applying Lemma~\ref{lem:trivial} to $\{\ket{\eta_i^{(d)}}_B\ket{\xi_j^{(d)}}_C\}_{(i,j)\in\bbZ_{d-1}\times \bbZ_{d-1}}$, we have
	\begin{equation}\label{eq:B31}
	E_{(\cB_3^{(d,d)})^{(A)}}=k_1\bbI_{(\cB_3^{(d,d)})^{(A)}}.
	\end{equation} Since $(\cA_1^{(d,d)})^{(A)}\cap (\cB_3^{(d,d)})^{(A)}\neq \emptyset$, it implies $k=k_1$. Thus, by Eqs.~\eqref{eq:eA_11}  and \eqref{eq:B31}, we obtain
	\begin{equation}
	E_{(\cA_1^{(d,d)})^{(A)}\cup(\cB_3^{(d,d)})^{(A)}}=k\bbI_{(\cA_1^{(d,d)})^{(A)}\cup(\cB_3^{(d,d)})^{(A)}}.
	\end{equation}
	The intuitive figure of $E$ can be shown in Fig.~\ref{fig:titeddd} (IV).

    \noindent{\bf Step 5}	By the symmetry of Fig.~\ref{fig:titedodd}, we can obtain  $E=k\bbI$.  The intuitive figure of $E$ can be shown in Fig.~\ref{fig:titeddd} (V).
    
     Thus,  $E$  is trivial. This completes the proof.

		(ii) The proof is similar as (i).	
	\end{proof}
	
	\vspace{0.4cm}

\end{document}